\documentclass[letterpaper, 10 pt, conference]{ieeeconf}  % Comment this line out if you need a4paper
\usepackage[left=0.75in, right=0.75in, top=0.75in, bottom=0.8in]{geometry}

\IEEEoverridecommandlockouts                              % This command is only needed if 
\usepackage{acronym}
\usepackage{algorithm}
\usepackage{algpseudocode}
\usepackage{amsmath} % assumes amsmath package installed
\usepackage{amssymb}  % assumes amsmath package installed
\usepackage{anyfontsize}
\usepackage{arydshln}

\usepackage{bm}

\usepackage{cases}
\usepackage{cite}
\usepackage{color}
\usepackage{csquotes}
\usepackage{caption}

\usepackage{epsfig} % for postscript graphics files
\usepackage{etoolbox}
\usepackage{eucal}

\usepackage{graphics} % for pdf, bitmapped graphics files
\usepackage{graphicx}
\usepackage{grffile}

\usepackage{hyperref}

\usepackage{indentfirst}

\usepackage{lipsum}
\usepackage{listings}

\usepackage{mathptmx} % assumes new font selection scheme installed
\usepackage{mathrsfs}
\usepackage{mathtools}

\usepackage{pgfplots}

\usepackage{relsize}

\usepackage{soul}
\usepackage{subcaption}

\usepackage{times} % assumes new font selection scheme installed
\usepackage{tikz}

\usepackage[nameinlink]{cleveref} 

\pgfplotsset{compat=1.16}

\creflabelformat{equation}{#2#1#3}
\crefname{equation}{Equation}{Equations}
\crefname{theorem}{Theorem}{Theorems}
\crefname{definition}{Definition}{Definitions}
\crefname{corollary}{Corollary}{Corollaries}
\crefname{lemma}{Lemma}{Lemmas}

\crefname{section}{Section}{Sections}

\crefname{table}{Table}{Tables}
\crefname{figure}{Figure}{Figures}
\crefname{algorithm}{Algorithm}{Algorithms}

\Crefname{table}{Table}{Tables}
\Crefname{figure}{Figure}{Figures}
\Crefname{algorithm}{Algorithm}{Algorithms}

\crefname{ineq}{Inequality}{Inequalities}
\Crefname{ineq}{Inequality}{Inequalities}
\creflabelformat{ineq}{#2{\upshape#1}#3}

\crefname{prob}{Problem}{Problems}
\Crefname{prob}{Problem}{Problems}
\creflabelformat{prob}{#2{\upshape#1}#3} 

\crefname{ass}{Assumption}{Assumptions}
\Crefname{ass}{Assumption}{Assumptions}
\creflabelformat{ass}{#2{\upshape#1}#3} 

\crefname{ALC@line}{step}{steps}

\usetikzlibrary{patterns}
\newtheorem{theorem}{Theorem}

\newtheorem{definition}{Definition}

\newtheorem{lemma}{Lemma}

\newtheorem{corollary}{Corollary}

\newtheorem{remark}{Remark}

\DeclarePairedDelimiter\set{\{}{\}}

\newcommand{\sthat}{\text{s.t.}}

\newcommand{\col}{\text{col}}

\newcommand{\diag}{\text{diag}}
\newcommand{\tr}{\text{Trace}}

\newcommand{\he}{\text{He}}

\newcommand{\norm}[1]{\left\| #1 \right\|}

\newcommand{\brackets}[1]{\left( #1 \right)}

\newcommand{\quadsupplyinput}[5]{\begin{bmatrix}
    #4 \\ #5
\end{bmatrix}^{\hspace{-1pt}T}\hspace{-4pt}\begin{bmatrix}
    #1 & #2 \\ * & #3
\end{bmatrix}\hspace{-3pt}\begin{bmatrix}
    #4 \\ #5
\end{bmatrix}}

\newcommand{\dbx}{\dot{\textbf{x}}}

\newcommand{\cH}{\mathcal{H}}

\newcommand{\cL}{\mathcal{L}}

\newcommand{\cT}{\mathcal{T}}

\newcommand{\bbA}{\mathbb{A}}

\newcommand{\bbG}{\mathbb{G}}

\newcommand{\bbJ}{\mathbb{J}}
\newcommand{\bbK}{\mathbb{K}}

\newcommand{\bbN}{\mathbb{N}}

\newcommand{\bbQ}{\mathbb{Q}}
\newcommand{\bbR}{\mathbb{R}}
\newcommand{\bbS}{\mathbb{S}}

\newcommand{\bzero}{\mathbf{0}}

\newcommand{\bA}{\mathbf{A}}
\newcommand{\bB}{\mathbf{B}}
\newcommand{\bC}{\mathbf{C}}
\newcommand{\bD}{\mathbf{D}}

\newcommand{\bG}{\mathbf{G}}
\newcommand{\bH}{\mathbf{H}}
\newcommand{\bI}{\mathbf{I}}

\newcommand{\bK}{\mathbf{K}}
\newcommand{\bL}{\mathbf{L}}

\newcommand{\bN}{\mathbf{N}}

\newcommand{\bP}{\mathbf{P}}
\newcommand{\bQ}{\mathbf{Q}}
\newcommand{\bR}{\mathbf{R}}
\newcommand{\bS}{\mathbf{S}}
\newcommand{\bT}{\mathbf{T}}

\newcommand{\bW}{\mathbf{W}}
\newcommand{\bX}{\mathbf{X}}
\newcommand{\bY}{\mathbf{Y}}
\newcommand{\bZ}{\mathbf{Z}}

\newcommand{\be}{\mathbf{e}}

\newcommand{\bn}{\mathbf{n}}

\newcommand{\bu}{\mathbf{u}}

\newcommand{\bx}{\mathbf{x}}
\newcommand{\by}{\mathbf{y}}
\newcommand{\bz}{\mathbf{z}}

\newcommand{\scrG}{\mathscr{G}}

\newcommand{\scrK}{\mathscr{K}}

\newcommand{\barbH}{\overline{\bH}}

\newcommand{\barbQ}{\overline{\bQ}}
\newcommand{\barbR}{\overline{\bR}}
\newcommand{\barbS}{\overline{\bS}}

\newcommand{\barbX}{\overline{\bX}}

\newcommand{\barbZ}{\overline{\bZ}}

\newcommand{\hatbn}{\widehat{\bn}}

\newcommand{\hatbu}{\widehat{\bu}}

\newcommand{\hatby}{\widehat{\by}}

\newcommand{\hatbH}{\widehat{\bH}}

\newcommand{\hatbK}{\widehat{\bK}}

\newcommand{\hatbQ}{\widehat{\bQ}}
\newcommand{\hatbR}{\widehat{\bR}}
\newcommand{\hatbS}{\widehat{\bS}}
\newcommand{\hatbT}{\widehat{\bT}}

\newcommand{\hatbX}{\widehat{\bX}}

\newcommand{\hatbZ}{\widehat{\bZ}}

\newcommand{\tilbH}{\widetilde{\bH}}

\newcommand{\tilbT}{\widetilde{\bT}}

\newcommand{\tilbZ}{\widetilde{\bZ}}

\newcommand{\thh}{\mathrm{th}}

\newcommand{\barby}{\overline{\by}}

\newcommand{\barbn}{\overline{\bn}}
\newcommand{\barscrG}{\overline{\scrG}}

\newcommand{\dbX}{\delta\hspace{-.5pt}{\bX}}
\newcommand{\dbY}{\delta\hspace{-.5pt}{\bY}}
\newcommand{\dbK}{\delta\hspace{-.5pt}{\bK}}

\newcommand{\dbP}{\delta\hspace{-.5pt}{\bP}}

\newcommand{\dbA}{\delta\hspace{-.5pt}{\bA}}

\newcommand{\dbC}{\delta\hspace{-.5pt}{\bC}}

\newcommand{\dhatbK}{\delta\hspace{-.5pt}{\hatbK}}
\newcommand{\dhatbX}{\delta\hspace{-.5pt}{\hatbX}}

\newcommand{\dhatbQ}{\delta\hspace{-.5pt}{\hatbQ}}
\newcommand{\dhatbS}{\delta\hspace{-.5pt}{\hatbS}}
\newcommand{\dhatbR}{\delta\hspace{-.5pt}{\hatbR}}

\usepackage[acronym]{glossaries}

\makeglossaries
\newacronym{ndt}{NDT}{Network Dissipativity Theorem}
\newacronym{kyp}{KYP}{Kalman-Yakubovixh-Popov}
\newacronym{io}{IO}{input-output}
\newacronym{lti}{LTI}{linear time-invariant}
\newacronym[plural=LMIs,longplural=linear matrix inequalities]{lmi}{LMI}{linear matrix inequality}
\newacronym{lqr}{LQR}{linear quadratic regulator}
\newacronym{lqg}{LQG}{Linear Quadratic Gaussian}
\newacronym{lq}{LQ}{linear quadratic}

\newacronym{admm}{ADMM}{alternating direction methods of multipliers}
\newacronym{sdp}{SDP}{semidefinite programming}
\newacronym[plural=BMIs,longplural=bilinear matrix inequalities]{bmi}{BMI}{bilinear matrix inequality}
\newacronym{ico}{ICO}{iterative convex overbounding}

\newacronym{uav}{UAV}{unmanned aerial vehicle}
\newacronym{cps}{CPS}{cyber-physical system}

\title{\LARGE \bf
Consensus-Based Multi-Objective Controller Synthesis
}

\author{Ingyu Jang$^{1}$ and Leila Bridgeman$^{1}$% <-this % stops a space
\thanks{*This work is supported by ONR Grant No. N00014-23-1-2043.}% <-this % stops a space
\thanks{$^{1}$Ingyu Jang (PhD Student), and Leila Bridgeman (Assistant Professor) are with the Department of Mechanical Engineering and Material Science, Duke University, Durham, NC 27708 USA
        (email: {\tt\small ij40@duke.edu; ljb48@duke.edu}, phone: 919-225-4215)}%
}

\begin{document}

\maketitle
\thispagestyle{empty}
\pagestyle{empty}

%%%%%%%%%%%%%%%%%%%%%%%%%%%%%%%%%%%%%%%%%%%%%%%%%%%%%%%%%%%%%%%%%%%%%%%%%%%%%%%%
\begin{abstract}
Despite longstanding interest, controller synthesis remains challenging for networks of heterogeneous, nonlinear agents. 
Moreover, the requirements for computational scalability and information privacy have become increasingly critical.
This paper introduces a dissipativity-based distributed controller synthesis framework for networks with heterogeneous agents and diverse performance objectives, leveraging the Network Dissipativity Theorem and iterative convex overbounding.
Our approach enables the synthesis of controllers in a distributed way by achieving a network-wide consensus on agents' dissipativity variables while keeping sensitive subsystem information locally.
The proposed framework is applied to full-state feedback controller synthesis.

\end{abstract}

%%%%%%%%%%%%%%%%%%%%%%%%%%%%%%%%%%%%%%%%%%%%%%%%%%%%%%%%%%%%%%%%%%%%%%%%%%%%%%%%
\section{INTRODUCTION}
Classical analysis and control theory for networked systems were established in the 1970s \cite{moylan1978stability,vidyasagar1981input}, but modern multi-agent systems, such as power grids and swarm robotics, exhibit structural complexities that exceed the capabilities of these traditional frameworks \cite{sztipanovits2011toward}.
These challenges have motivated research on compositional stability analysis and structured controller synthesis \cite{caliskan2014compositional,jovanovic2016controller,lian2018sparsity}. 
Nevertheless, most existing synthesis methods rely on centralized computation or global information exchange during the offline design phase, which incurs significant scalability bottlenecks and concerns regarding data privacy or intellectual property.
To address these limitations, this work integrates energy-based compositional controller synthesis with a distributed optimization framework.
Our approach enables the design of a global controller through purely local computation to achieve a network-wide consensus on common dissipativity parameters, significantly reducing the communication overhead and eliminating the need to share sensitive subsystem parameters across the network.

Dissipativity \cite{willems1972dissipative,hill1977stability} plays a fundamental role in the stability analysis of physical systems by leveraging \gls{io} relationships rather than internal states.
Crucially for multi-agent systems, the \gls{ndt} established that interconnected dissipative subsystems yield dissipativity of the network and, under additional conditions, stability \cite{moylan1978stability,vidyasagar1981input}. This modularizes stability analysis by using only open-loop dissipativity characteristics of individual agents. 
% Its inherent compositionality is particulary advantageous for multi-agent systems, as interconnected dissipative subsystems yield dissipativity of the network and, under additional conditions, stability.
% In \cite{moylan1978stability,vidyasagar1981input} extend these advantages by introducing \gls{ndt}, which modularizes the stability analysis by using only open-loop dissipativity characteristics of individual agents.
This decouples local dynamics from the network topology, thereby facilitating the analysis and control of networked systems with heterogeneous and nonlinear agents \cite{arcak2016networks,locicero2025dissipativity}.
These modular properties make \gls{ndt} uniquely compatible with distributed optimization \cite{yang2019survey}, enabling an approach that eliminates the need for global information exchange during the design phase.

% Dissipativity \cite{willems1972dissipative,hill1977stability} plays a fundamental role in the stability analysis of physical systems by leveraging \gls{io} relationships rather than internal states.
% Its inherent compositionality is particulary advantageous for multi-agent systems, as interconnected dissipative subsystems yield dissipativity of the network and, under additional conditions, stability.
% The work in \cite{moylan1978stability,vidyasagar1981input} extend these advantages by introducing \gls{ndt}, which modularizes the stability analysis by using only open-loop dissipativity characteristics of individual agents.
% This modularization decouples local dynamics from the network topology, thereby facilitates the analysis and control of networked systems with heterogeneous and nonlinear agents \cite{arcak2016networks,locicero2025dissipativity}.
% These modular properties make \gls{ndt} uniquely compatible with distributed optimization \cite{yang2019survey}, enabling an approach that eliminates the need for global information exchange during the design phase.

Growing demands for security and computational efficiency have driven interest in distributed optimization for networked systems \cite{nedich2015convergence,nedic2018network,yang2019survey}.
Recent efforts have integrrated distributed optimization with dissipativity concepts.
In \cite{agarwal2020distributed}, a dissipativity-based distributed synthesis using an iterative Schur complement was proposed, but the approach is restricted to \gls{lti} agents and lacks a formal performance objective.
For stability certification, \cite{meissen2015compositional,jang2025consensus} used the \gls{ndt} and distributed optimization to decompose network stability analysis. 

This paper adapts the analysis strategies in \cite{meissen2015compositional,jang2025consensus} to synthesis by enforcing consensus on the variables associated with a global constraint using \gls{admm} and \gls{ico}.
The proposed framework adopts abstract local controllers, admitting flexible controller implementations, including full-state feedback, observer-based controllers, or neural network controllers.
Furthermore, by leveraging distributed optimization, our approach handles heterogeneous agents with diverse performance objectives.
A case study with full-state feedback controllers is presented to illustrate the practical implementation of the proposed framework.

\section{PRELIMINARIES} \label{chap:Preliminaries}

\subsection{Notation}
The sets of real numbers and natural numbers up to $n$ are denoted by $\bbR$ and $\bbN_n$, respectively. 
The set of real $n{\times}m$ matrices and $n{\times}n$ symmetric matrices are $\bbR^{n\times m}$ and $\bbS^n$, respectively. 
For $\bA{\in}\bbR^{\sum_{i=1}^Nn_i{\times}\sum_{j=1}^N m_j}$, $(\bA)_{ij}{\in}\bbR^{n_i{\times}m_j}$ is its $(i,j)^\thh$ block, then $\bA$ is said to be a block-wise matrix in $\bbR^{N\times M}$.
The block diagonal matrix of $\bA_i$ for all $i{\in}\bbN_N$ is $\diag(\bA_i)_{i=1}^N$. 
$\col(\bx_i)_{i=1}^N$ represents $[\bx_1^T{,}{\dots}{,}\bx_N^T]^T$ for vectors $\bx_i{\in}\bbR^{n_i}$.
The notation $\bA{\prec}0$ and $\bA{\preceq}0$ indicate that $\bA$ is negative-definite and negative semi-definite, respectively. 
For brevity, $\he(\bA){=}\bA{+}\bA^T$ and asterisks, $*$, denote duplicate blocks in symmetric matrices.
$\cT_0^1(\bA)$ is the 1st order Taylor expansion of the matrix variable $\bA$ from its initial point $\bA^0$, meaning $\cT_0^1(\bA){=}\bA^0{+}\delta\bA$.
The $n{\times}n$ identity matrix and zero matrix are denoted by $\bI_n$ and $\bzero_{n\times n}$, respectively. 

The set of square integrable functions is $\cL_{2}$. 
The Frobenius norm and $\cL_2$ norm are denoted by $\|{\cdot}\|_F$ and $\|{\cdot}\|_2$, respectively. 
The truncation of a function $\by(t)$ at $T$ is denoted by $\by_T(t)$, where $\by_T(t){=}\by(t)$ if $t\leq T$, and $\by_T(t){=}0$ otherwise. 
If $\|\by_T\|_2^2{=}{\langle}\by_T,\by_T{\rangle}{=}\int_0^{\infty}\by_T^T(t)\by_T(t)dt{<}\infty$ for all $T{\geq}0$, then $\by{\in}\cL_{2e}$, where $\cL_{2e}$ is the extended $\cL_2$ space. 
The indicator function is denoted by $I_\bbA{:}\Omega{\to}\set{0,1}$, where $I_\bbA(x){=}0$ if $x{\in}\bbA$ and $I_\bbA(x){=}1$ otherwise for all $x{\in}\Omega$.

An $N$-tuple $(\bA_i)_{i=1}^N{\in}\bbR^{m_1{\times}n_1}{\times}{\cdots}{\times}\bbR^{m_N{\times}n_N}$ is composed of $\bA_i{\in}\bbR^{m_i{\times}n_i}$ for $i{\in}\bbN_N$, and is equivalent to $(\bA_1{,}{\dots}{,}\bA_N)$.
Given two $N$-tuples $\bA$ and $\bB$, $\bA{+}\bB$ denotes the component-wise sum $(\bA_i{+}\bB_i)_{i=1}^N$.
Other algebraic operations are defined analogously.
For given $N$-tuple $\bA$, its norm is defined as the sum of the constituent norms, i.e., $\norm{\bA}{=}\sum_{i=1}^N\norm{\bA_i}$.

\subsection{Alternating Direction Method of Multipliers (ADMM)} \label{chap:ADMM}

\gls{admm} is a powerful algorithm for distributed optimization\cite{boyd2011distributed}. Consider the following constrained problem,
\begin{align} \label{eq:Constrained Optimization Problem}
\begin{split}
    \min_{\substack{
            \scriptscriptstyle \bX_i,\forall i\in\bbN_N
        }}\;\;&\sum_{i\in\bbN_N}f_i(\bX), \quad
    \sthat\;\;\bX_i\in\Omega_i\;\forall i\in\bbN_N,\;\bX\in\Psi,
\end{split}
\end{align}
where $\bX_i{\in}\bbR^{n_i\times m_i}$, $f_i{:}\bbR^{n_1\times m_1}{\times}{\cdots}{\times}\bbR^{n_N\times m_N}{\to} \bbR{\cup}\set{{+}\infty}$, and $\Omega_i$ are $i^\thh$ local variable, objective function, and constraint set defined on $\bbR^{n_i{\times} m_i}$, respectively, $\bX{=}(\bX_i)_{i\in\bbN_N}$, and $\Psi$ is the global constraint set. 
By introducing $N$ slack variables, $\bY_i{=}\bX$ for all $i{\in}\bbN_N$, \cref{eq:Constrained Optimization Problem} can be reformulated as
\begin{align} \label{eq:Equivalent Constrained Optimization Problem}
    \begin{split}
        \min_{\substack{
            \scriptscriptstyle \bY_i,\forall i\in\bbN_N,\bZ
        }}&\sum_{i\in\bbN_N}f_i(\bY_i)+\sum_{i\in\bbN_N}I_{\Omega_i^{\bY}}(\bY_i)+I_\Psi(\bZ) \\[-3pt]
        \sthat\quad&\bY_i-\bZ=\bzero\quad\forall i\in\bbN_N,
    \end{split}
\end{align}
with global ``consensus" variable $\bZ$, where $\Omega_i^{\bY}$ is the same set as $\Omega_i$ but defined over $\bbR^{n_1\times m_1}{\times}{\cdots}{\times}\bbR^{n_N\times m_N}$. \gls{admm} iteratively solves \cref{eq:Equivalent Constrained Optimization Problem} through
\begin{subequations} \label{eq:General ADMM}
    \begin{align}
        \bY_i^{k+1}
            &{=}\arg\min_{\substack{
                    \scriptscriptstyle \bY_i
                }}\Big(f_i(\bY_i){+}I_{\Omega_i^\bY}(\bY_i)
                {+}\frac{\rho}{2}\|\bY_i
                    {-}\bZ^k
                    {+}\bT_i^k\|_F^2\Big), 
                    \label{eq:General ADMM x Update} \\[-3pt]
        \bZ^{k+1}
            &{=}\arg\min_{\substack{
                    \scriptscriptstyle \bZ
                }}\Big(I_\Psi(\bZ)
                {+}\sum_{i\in\bbN_N}\frac{\rho}{2}\|\bY_i^{k+1}
                    {-}\bZ
                    {+}\bT_i^k\|_F^2\Big) 
            %         \nonumber \\ 
            % &{=} \Pi_\Omega(\bX^{k+1}+\bT^k), 
                    \label{eq:General ADMM z Update} \\[-3pt]
        \bT_i^{k+1}
            &=\bT_i^k+(\bY_i^{k+1}
                -\bZ^{k+1}), 
                \label{eq:General ADMM u Update}
    \end{align}
\end{subequations}
where $\bT_i{\in}\bbR^{n_1{\times} m_1}{\times}{\cdots}{\times}\bbR^{n_N{\times} m_N}$ is the $i^\thh$ ``dual" variable corresponding to $\bY_i$, $k$ is the iteration index, and $\rho{>}0$ is a hyper parameter \cite{boyd2011distributed}. 
The stopping criteria of \gls{admm} are $r_p^i{=}\frac{\|\bY_i^k{-}\bZ^k\|_F}{\|\bY_i\|_F}{\leq}\epsilon_p$ for all $i{\in}\bbN_N$ and $r_d{=}\frac{\|\bZ^k{-}\bZ^{k{-}1}\|_F}{\|\bZ\|_F}{\leq}\epsilon_d$
If $f_i$ and $I_{\Omega_i^\bY}$ for all $i{\in}\bbN_N$, and $I_\Psi$ are closed, proper, and convex, \cref{eq:Equivalent Constrained Optimization Problem} is a convex optimization with equality constraints, which satisfies Slater's condition, implying that its Lagrangian has a saddle point \cite{boyd2004convex}. If \cref{eq:Equivalent Constrained Optimization Problem} is feasible and its Lagrangian has a saddle point, \gls{admm} guarantees that as $k{\to}\infty$: $\bY_i^{k}{-}\bZ^k{\to}\bzero$; $\sum_{i=1}^Nf_i(\bY_i^k){+}I_{\Omega_i^\bY}(\bY_i^k){+}I_\Psi(\bZ){\to}\sum_{i=1}^Nf_i(\bY^\star)$; and $\bT_i^k{\to}\bT_i^\star$ for all $i{\in}\bbN_N$, where $\bY^\star{=}\bX^\star$ is a solution of \cref{eq:Constrained Optimization Problem} \cite{boyd2011distributed}.

\subsection{\texorpdfstring{$\bQ\bS\bR$}{QSR}-Dissipativity of Large-Scale Multi-Agent Systems}
$\bQ\bS\bR$-dissipativity relates system inputs and outputs.
\begin{definition} [$\bQ\bS\bR$-Dissipativity \cite{vidyasagar1981input}] \label{def:QSR}
    Let $\bQ{\in}\bbS^{l}$, $\bR{\in}\bbS^{m}$, $\bS{\in}\bbR^{l\times m}$. The system $\scrG{:}\cL_{2e}^m{\to}\cL_{2e}^l$ is \textit{$\bQ\bS\bR$-dissipative} if there exists $\beta{\in}\bbR$ such that for all $\bu{\in}\cL_2^m$ and $T{>}0$,
    \begin{align} \label{eq:QSR Dissipativity}
        \int_0^T\quadsupplyinput{\bQ}{\bS}{\bR}{\scrG(\bu(t))}{\bu(t)}dt\geq\beta.
    \end{align}
\end{definition}
\cref{lem:KYP Lemma} establishes $\bQ\bS\bR$-dissipativity of \gls{lti} systems.
\begin{lemma}[\cite{gupta1996robust}] \label{lem:KYP Lemma} 
    An \gls{lti} system with minimal realization $\Sigma{:}\dot{\bx}{=}\bA\bx{+}\bB\bu,\ \by{=}\bC\bx{+}\bD\bu$ is $\bQ\bS\bR$-dissipative if there exist matrices $\bP{\succ}0$, $\bQ$, $\bS$, and $\bR$ satisfying
    \setlength{\arraycolsep}{2.5pt}
    \begin{align} \label{eq:KYP Lemma}
        \begin{bmatrix}
            \bA^T\bP{+}\bP\bA{-}\bC^T\bQ\bC & \bP\bB{-}\bC^T\bS{-}\bC^T\bQ\bD \\
            \bB^T\bP{-}\bS^T\bC{-}\bD^T\bQ\bC & {-}\bR{-}\bS^T\bD{-}\bD^T\bS{-}\bD^T\bQ\bD
        \end{bmatrix}\preceq0.
    \end{align}
\end{lemma}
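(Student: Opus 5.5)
We prove \cref{lem:KYP Lemma} with a storage-function argument: $V(\bx)=\bx^T\bP\bx$ serves as the storage function, and the matrix inequality \cref{eq:KYP Lemma} is read as a pointwise dissipation inequality. Fix $\bu\in\cL_2^m$ and an initial state $\bx(0)=\bx_0$, and let $\bx(t)$ be the resulting trajectory with output $\by=\bC\bx+\bD\bu$. Along trajectories,
\begin{align*}
\dot V = \bx^T(\bA^T\bP+\bP\bA)\bx + 2\bx^T\bP\bB\bu .
\end{align*}
Expanding the supply rate with $\by=\bC\bx+\bD\bu$ expresses it as a quadratic form in $\col(\bx,\bu)$. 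Its $(1,1)$ block is $\bC^T\bQ\bC$, its $(1,2)$ block is $\bC^T\bQ\bD+\bC^T\bS$, and its $(2,2)$ block is $\bD^T\bQ\bD+\bS^T\bD+\bD^T\bS+\bR$.

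The first key step is to verify the algebraic identity
\begin{align*}
\begin{bmatrix}\bx\\ \bu\end{bmatrix}^T M \begin{bmatrix}\bx\\ \bu\end{bmatrix} = \dot V - \quadsupplyinput{\bQ}{\bS}{\bR}{\by}{\bu},
\end{align*}
where $M$ is the matrix in \cref{eq:KYP Lemma}. This is a routine block-by-block comparison. Since $M\preceq0$, it yields the pointwise inequality $\dot V(t)\le \by^T\bQ\by+2\by^T\bS\bu+\bu^T\bR\bu$ for almost every $t$.

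Next we integrate over $[0,T]$:
\begin{align*}
\int_0^T \quadsupplyinput{\bQ}{\bS}{\bR}{\by(t)}{\bu(t)}dt \;\ge\; V(\bx(T))-V(\bx(0)) \;\ge\; -\bx_0^T\bP\bx_0 ,
\end{align*}
where the last inequality uses $\bP\succ0$, so that $V(\bx(T))\ge0$. Taking $\beta=-\bx_0^T\bP\bx_0$ then satisfies \cref{eq:QSR Dissipativity} for all $T>0$ and all $\bu$, with $\beta=0$ in the zero-initial-condition operator setting $\scrG:\cL_{2e}^m\to\cL_{2e}^l$.

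The only point needing care is regularity. For $\bu\in\cL_2^m$ (hence locally integrable), $\bx$ is absolutely continuous, so $V(\bx(t))$ is absolutely continuous. The fundamental theorem of calculus therefore applies to $\dot V$ almost everywhere, and $\by\in\cL_{2e}$ guarantees that the integral of the supply rate is finite. Minimality is not needed for this sufficiency direction; it matters only for the converse.
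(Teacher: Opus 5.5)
Your proof is correct: the block-by-block identity between $\dot V$ minus the supply rate and the quadratic form of the matrix in \cref{eq:KYP Lemma} checks out, integrating and using $V(\bx(T))\ge0$ gives $\beta=-\bx_0^T\bP\bx_0$ (zero for the zero-initial-state operator), and you are right that minimality plays no role in this direction. The paper gives no proof of its own and only cites \cite{gupta1996robust}; your storage-function argument with $V(\bx)=\bx^T\bP\bx$ is the standard proof of this sufficiency direction.
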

$\bQ\bS\bR$-dissiaptivity is useful for ensuring $\cL_2$-stability. 
\begin{definition}[$\cL_2$-stability \cite{vidyasagar1981input}] \label{def:L2 Stable}
    An operator $\scrG{:}\cL_{2e}^m{\mapsto}\cL_{2e}^n$ is $\cL_2$-stable if for any $\bu{\in}\cL_2^m$ and all $\bx_0$ there exist a constant $\gamma{>}0$ and a function $\beta(\bx_0)$ such that
    \begin{align}
        \|(\scrG\bu)_T\|_2\leq\gamma\|\bu_T\|_2+\beta(\bx_0),\quad T>0.
    \end{align}
\end{definition}

For multi-agent systems, \gls{ndt}, stated below, relates the dissipativiy of each agent to $\cL_2$ stability of entire system.

\begin{theorem}[\gls{ndt} \cite{vidyasagar1981input}] \label{thm:ndt}
    Consider a multi-agent system, $\scrG{:}\bu{\to}\by$, composed of $N$ $\bQ_i\bS_i\bR_i$-dissipative agents, $\scrG_i:\cL^{m_i}_{2e}{\to}\cL^{l_i}_{2e}$, with mappings $\by_i{=}\scrG_i\bu_i$ interconnected as
    \begin{align}\label{eq:Interconnected system}
        \begin{split}
            \by=\scrG\be, \qquad \bu=\be+\barbH\by, 
        \end{split}
    \end{align}
    where $\bu{=}\col(\bu_i)_{i{=}1}^N$, $\by{=}\col(\by)_{i{=}1}^N$, $\be{=}\col(\bu_i)_{i{=}1}^N$ are the aggregated input of each agent, output of each agent, and the exogenous input to the multi-agent network, respectively,, and $\barbH$ is the interconnection matrix with $(\barbH)_{ii}{=}\bzero$.
    Then $\mathscr{G}$ is $\cL_2$ stable if %$\barbQ\prec 0$, where
    \begin{align}\label{eq:ndt}
        \barbQ+\barbS\barbH+\barbH^T\barbS^T+\barbH^T\barbR\barbH\prec0,
    \end{align}
    with $\barbQ{=}\diag(\barbQ_i)_{i=1}^N$, $\barbR{=}\diag(\bR_i)_{i=1}^N$, and $\barbS{=}\diag(\bS_i)_{i=1}^N$.
\end{theorem}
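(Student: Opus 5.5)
The plan is the classical Moylan--Hill/Vidyasagar argument: sum the agents' dissipation inequalities, substitute the interconnection, and use the strict negativity in \cref{eq:ndt} to get an $\cL_2$ gain bound from $\be$ to $\by$. Note that in \cref{eq:ndt} $\barbQ$ should be read as $\diag(\bQ_i)_{i=1}^N$, consistent with $\barbS$ and $\barbR$.

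First, aggregate the dissipativity of the agents. By \cref{def:QSR}, each $\scrG_i$ satisfies
\begin{align*}
\int_0^T \by_i^T\bQ_i\by_i+2\by_i^T\bS_i\bu_i+\bu_i^T\bR_i\bu_i\,dt\geq\beta_i
\end{align*}
for all $T>0$ and all admissible inputs. I would apply this to the truncated signals $\bu_{i,T}\in\cL_2$, using causality, so that it holds for $\cL_{2e}$ signals on $[0,T]$. Summing over $i\in\bbN_N$ gives the same inequality for the stacked signals with $\barbQ,\barbS,\barbR$ block diagonal and $\beta=\sum_i\beta_i$.

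Second, substitute $\bu=\be+\barbH\by$. The integrand becomes
\begin{align*}
\by^T(\barbQ+\barbS\barbH+\barbH^T\barbS^T+\barbH^T\barbR\barbH)\by+2\by^T(\barbS+\barbH^T\barbR)\be+\be^T\barbR\be .
\end{align*}
By \cref{eq:ndt} there is $\epsilon>0$ with the first quadratic form at most $-\epsilon\|\by\|^2$. Set $c_1=\|\barbS+\barbH^T\barbR\|$ and $c_2=\|\barbR\|$. Integrating and applying Cauchy--Schwarz then gives
\begin{align*}
\epsilon\|\by_T\|_2^2\leq 2c_1\|\by_T\|_2\|\be_T\|_2+c_2\|\be_T\|_2^2-\beta .
\end{align*}

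Third, solve this scalar quadratic inequality in $\|\by_T\|_2$. Completing the square, for example via $2c_1ab\leq\frac{\epsilon}{2}a^2+\frac{2c_1^2}{\epsilon}b^2$, yields
\begin{align*}
\|\by_T\|_2^2\leq\tfrac{2}{\epsilon}\Big(\tfrac{2c_1^2}{\epsilon}+c_2\Big)\|\be_T\|_2^2-\tfrac{2\beta}{\epsilon}.
\end{align*}
Taking square roots and using $\sqrt{a+b}\leq\sqrt a+\sqrt b$ gives $\|\by_T\|_2\leq\gamma\|\be_T\|_2+\beta'$ with $\beta'=\sqrt{\max(0,-2\beta/\epsilon)}$. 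This is exactly \cref{def:L2 Stable} for the map $\be\mapsto\by$, where the dependence on initial conditions enters through $\beta$.

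The main obstacle is the first step, which concerns well-posedness rather than algebra. One must assume the interconnection \cref{eq:Interconnected system} is well posed and causal, so that for $\be\in\cL_2$ the internal signals $\bu,\by$ lie in $\cL_{2e}$. Only then does the truncation argument let \cref{def:QSR}, stated for $\cL_2$ inputs, be applied on each horizon $[0,T]$. I would state this as a standing assumption. The remaining steps are routine.
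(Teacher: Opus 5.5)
The paper gives no proof of this result; it is quoted from \cite{vidyasagar1981input} (see also \cite{moylan1978stability}). Your argument is the standard proof from those references, and the algebra is correct throughout: sum the agents' supply inequalities, substitute $\bu=\be+\barbH\by$, bound the quadratic form in $\by$ by $-\epsilon\|\by\|^2$ via \cref{eq:ndt}, and complete the square. The hypotheses you flag are exactly the standing assumptions of the cited theorem: well-posedness and causality of the loop, so that $\by\in\cL_{2e}$ and the truncation argument extends \cref{def:QSR} from $\cL_2$ to $\cL_{2e}$ inputs. Reading $\barbQ$ as $\diag(\bQ_i)_{i=1}^N$ correctly fixes a typo in the statement.
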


\subsection{Iterative Convex Overbounding (ICO)} \label{subChap:convex_overbounding}
Optimal control synthesis problems frequently involve nonconvex \glspl{bmi} of the form
\begin{align} \label{eq:bmi}
    \bQ+\he(\bX\bN\bY)\prec0,
\end{align}
where $\bN{\in}\bbR^{p{\times}q}$ is fixed, and $\bQ{\in}\bbS^n$, $\bX{\in}\bbR^{n{\times}p}$. and $\bY{\in}\bbR^{q{\times}n}$ are design variables.
\cref{thm:overbounding} averts the NP-hardness of \cref{eq:bmi}, by solving a restricted, but convex problem.
\begin{theorem} [\cite{sebe2018sequential}] \label{thm:overbounding}
    Consider the matrices $\bQ{\in}\bbS^n$, $\bN\in\bbR^{p\times q}$, $\bX{\in}\bbR^{n\times p}$. and $\bY{\in}\bbR^{q\times n}$, where $\bQ$, $\bX$, and $\bY$ are design variables.
    The \gls{bmi} condition $\bQ{+}\he(\bX\bN\bY){\prec}0$ is implied
    \begin{align} \label{eq:overbounding_sebe}
        \begin{bmatrix}
            \bQ & \bX\bN{+}\bY^T\bG^T \\
            \bN^T\bX^T{+}\bG\bY & -\he(\bG)
        \end{bmatrix}{\prec}0
    \end{align}
    for any $\bG{\in}\bbR^{q{\times}q}$ satisfying $\he(\bG){\succ}0$.
\end{theorem}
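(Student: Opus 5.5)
The plan is to reduce the claim to a single Schur-complement step followed by a completion-of-squares bound on the bilinear term. Suppose the matrix inequality in \cref{eq:overbounding_sebe} holds for some $\bG$ with $\he(\bG){\succ}0$. Then its $(2,2)$ block $-\he(\bG)$ is negative definite, so $\he(\bG)$ is invertible. By the Schur complement lemma, \cref{eq:overbounding_sebe} is equivalent to
\begin{align*}
    \bQ+(\bX\bN+\bY^T\bG^T)\,\he(\bG)^{-1}(\bX\bN+\bY^T\bG^T)^T\prec0 .
\end{align*}
It therefore suffices to show that
\begin{align*}
    \he(\bX\bN\bY)\preceq(\bX\bN+\bY^T\bG^T)\,\he(\bG)^{-1}(\bX\bN+\bY^T\bG^T)^T .
\end{align*}
Adding $\bQ$ to both sides then yields $\bQ+\he(\bX\bN\bY)\prec0$.

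To prove this matrix inequality I would avoid inverting $\bG$ and work with a congruence instead. I form the block matrix
\begin{align*}
    \bM=\begin{bmatrix}\bzero & \bX\bN+\bY^T\bG^T\\ \bN^T\bX^T+\bG\bY & -\he(\bG)\end{bmatrix}
\end{align*}
and multiply it on the left by $[\bI_n,\ \bY^T]$ and on the right by its transpose. Expanding term by term gives
\begin{align*}
    \he\big((\bX\bN+\bY^T\bG^T)\bY\big)-\bY^T\he(\bG)\bY=\he(\bX\bN\bY)+\he(\bY^T\bG^T\bY)-\bY^T(\bG+\bG^T)\bY .
\end{align*}
The last two terms cancel, since $\he(\bY^T\bG^T\bY)=\bY^T\bG^T\bY+\bY^T\bG\bY$. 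So the congruence leaves exactly $\he(\bX\bN\bY)$.

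This suggests the cleanest route: skip the Schur complement and apply the congruence directly to \cref{eq:overbounding_sebe}. Multiplying \cref{eq:overbounding_sebe} by $[\bI_n,\ \bY^T]$ on the left and $[\bI_n,\ \bY^T]^T$ on the right, the $\bQ$ block contributes $\bQ$, and the remaining blocks contribute $\he(\bX\bN\bY)$ by the computation above. The result is $\bQ+\he(\bX\bN\bY)$. Since $[\bI_n,\ \bY^T]^T$ has full column rank (its top block is $\bI_n$), the congruence preserves strict negative definiteness. Hence $\bQ+\he(\bX\bN\bY)\prec0$, which is the claimed implication. The hypothesis $\he(\bG){\succ}0$ is not used in this step; it only ensures that \cref{eq:overbounding_sebe} can be feasible and makes the bound conservative rather than vacuous.

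The only delicate point is checking the block multiplication and the cancellation of the $\bG$ terms, with transposes handled consistently. In particular, the off-diagonal block $\bX\bN+\bY^T\bG^T$ must pair with $\bY$ so as to produce $\bY^T\bG^T\bY$, which cancels against $-\bY^T\he(\bG)\bY$. Once that algebra is verified, the rest is immediate.
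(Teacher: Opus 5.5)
The paper quotes this theorem from \cite{sebe2018sequential} without proof, so there is no in-paper argument to compare against. On its own merits, your final argument is complete and correct. The congruence with $[\bI_n\;\;\bY^T]$ maps the left-hand side of \cref{eq:overbounding_sebe} exactly to $\bQ+\he(\bX\bN\bY)$, and the full column rank of $[\bI_n\;\;\bY^T]^T$ preserves strictness.

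The Schur-complement route you set up first and then dropped rests on $\he(\bX\bN\bY)\preceq\bL\,\he(\bG)^{-1}\bL^T$, with $\bL=\bX\bN+\bY^T\bG^T$. You asserted this but did not prove it. It does hold, by completing the square:
\begin{align*}
\bL\,\he(\bG)^{-1}\bL^T-\he(\bX\bN\bY)=(\bX\bN-\bY^T\bG)\,\he(\bG)^{-1}(\bX\bN-\bY^T\bG)^T\succeq0 .
\end{align*}
The two routes buy different things. Yours is shorter, needs no inverse, and shows directly that the LMI is a dilation of the BMI. The Schur route gives the exact slack: given $\he(\bG)\succ0$, \cref{eq:overbounding_sebe} is equivalent to the BMI with the quadratic term above added. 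That is what underlies the paper's remark that the ICO tightening shrinks with $(\dbX,\dbY)$ and vanishes at $\dbX{=}\bzero$, $\dbY{=}\bzero$.

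One small correction: $\he(\bG)\succ0$ does not merely make \cref{eq:overbounding_sebe} feasible. It is the negated $(2,2)$ block of that inequality and is therefore implied by it. So the hypothesis is redundant for the implication, which is why your proof never needs it.
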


The conservatism introduced by \cref{eq:overbounding_sebe} can be mitigated by iteratively updating a feasible point, $(\bX^i,\bY^i)$, satisfying \cref{eq:bmi}.
By defining $(\bX^{i+1},\bY^{i+1})=(\bX^i{+}\dbX,\bY^i{+}\dbY)$, $\dbX$ and $\dbY$ serve as the decision variables. 
The tightening of \cref{eq:overbounding_sebe} relative to \cref{eq:bmi} is proportional to these perturbations. 
As detailed in \cite{warner2017iterative}, this iterative scheme reduces the inherent conservatism of \cref{thm:overbounding}.
Each optimization problem remains feasible since $\dbX{=}\bzero$ and $\dbY{=}\bzero$ yield the initial feasible point.

\begin{remark}
    In this paper, $\bI$ is used as $\bG$, but any $\bG$ satisfying $\he(\bG){\succ}0$ can be used for \cref{eq:overbounding_sebe}. 
\end{remark}

\section{Distributed Controller Synthesis} \label{chap:Distributed Controller Synthesis}

\subsection{Multi-Agent Networked System}
Consider a multi-agent networked system $\bz{=}\scrG(\be)$, where $\be{\in}\bbR^{\sum_{i=1}^Nm_i}$ and $\bz{\in}\bbR^{\sum_{i=1}^Nl_i}$ are the exogenous input and output of $\scrG$, composed of $N$ heterogeneous agents $\by_i{=}\scrG_i(\bu_i)$, where $\bu_i{\in}\bbR^{m_i}$ and $\by_i{\in}\bbR^{l_i}$ are the local input and output of $\scrG_i$.
The interconnection topology between the agents and the exogenous network signals is described by
\begin{align}
    \bu=\tilbH\be+\bH\by,\quad
    \bz=\hatbH\by
\end{align}
where $\bu{=}\col(\bu_i)_{i=1}^N$, $\by$, $\be$, and $\bz$ are defined analogously.
$\bH$, $\tilbH$, and $\hatbH$ characterize the inter-agent couplings, the mapping from the network's exogenous inputs to agent inputs, and the mapping from agent outputs to network outputs, respectively.
Assume no self-feedback loops, $(\bH)_{ii}{=}\bzero$ for all $i{\in}\bbN_N$.

% We introduce a feedback control layer to achieve the specific performance of the entire network.
% Consider local controllers $\hatby_i{=}\scrK_i(\hatbu_i)$, where $\hatbu_i{\in}\bbR^{l_i}$ and $\hatby_i{\in}\bbR^{m_i}$ are the input and output of $i^\thh$ local controller.
% These local units constitute a global control law $\hatby{=}\scrK(\hatbu)$ with output $\hatby$ and input $\hatbu{=}\bz{+}\hatbn$, where $\hatby{=}\col(\hatby_i)_{i=1}^N$ and $\hatbu{=}\col(\hatbu_i)_{i=1}^N$. $\hatbn\in\bbR^{\sum_{i=1}^Nl_i}$ is the exogenous input to this global controller, such as measurement noise.

We introduce a feedback control layer to achieve specific agent-wide performance.
Consider local controllers $\hatby_i{=}\scrK_i(\hatbu_i)$, where $\hatbu_i{\in}\bbR^{l_i}$ and $\hatby_i{\in}\bbR^{m_i}$ are the input and output of $\scrK_i$.
These decentralized local units constitute a global control law $\hatby{=}\scrK(\hatbu)$, where $\hatby{=}\col(\hatby_i)_{i=1}^N$ and $\hatbu{=}\col(\hatbu_i)_{i=1}^N$. 
% $\hatbn\in\bbR^{\sum_{i=1}^Nl_i}$ is the exogenous input to this global controller, such as measurement noise.

The feedback interconnection between each agent and its local controller forms the closed-loop networked system $\barby{=}\barscrG(\barbn)$ as illustrated in \cref{fig1a:global}, where $\barbn{=}\col(\bn,\hatbn)$ and $\barby{=}\col(\by,\hatby)$.
Here, $\bn$ indicates the exogenous disturbances to $\scrG$ and $\hatbn$ indicates the measurement noise inputs to $\scrK$.
Consequently, the augmented interconnection between agents, controllers, and the network is expressed as
\begin{align} \label{eq:io_global_network}
    \begin{bmatrix}
        \bu \\ \hatbu
    \end{bmatrix}
    {=}\begin{bmatrix}
        \bH\by{+}\tilbH\hatby{+}\tilbH\bn \\
        \hatbH\by{+}\hatbn
    \end{bmatrix}
    {=}\barbH
    \begin{bmatrix}
        \by \\ \hatby
    \end{bmatrix}
    {+}\begin{bmatrix}
        \tilbH\bn \\ \hatbn
    \end{bmatrix},
    \barbH{=}\begin{bmatrix}\;
        \bH & \tilbH \\ \hatbH & \bzero
    \end{bmatrix}
\end{align}
as depicted in \cref{fig1b:module}. 
According to \gls{ndt}, if each $\scrG_i$ is $\bQ_i\bS_i\bR_i$-dissipative and each $\scrK_i$ is $\hatbQ_i\hatbS_i\hatbR_i$-dissipative, $\barscrG$ is $\cL_2$ stable if it satisfies \cref{eq:ndt} with $\barbH$, $\barbQ=\diag(\diag(\bQ_i)_{i=1}^N,\diag(\hatbQ_i)_{i=1}^N)$, $\barbS=\diag($ $\diag(\bS_i)_{i=1}^N,$ $\diag(\hatbS_i)_{i=1}^N)$, and
$\barbR{=}\diag(\diag(\bR_i)_{i=1}^N{,}\diag(\hatbR_i)_{i=1}^N)$.
% \begin{align*}
%     \barbQ&=\diag(\diag(\bQ_i)_{i=1}^N,\diag(\hatbQ_i)_{i=1}^N), \\
%     \barbS&=\diag(\diag(\bS_i)_{i=1}^N,\diag(\hatbS_i)_{i=1}^N), \\
%     \barbR&=\diag(\diag(\bR_i)_{i=1}^N,\diag(\hatbR_i)_{i=1}^N).
% \end{align*}

\begin{figure}
    \captionsetup[sub]{aboveskip=1pt, belowskip=1pt}
    \centering
    \subfloat[Network representation]{%
    \includegraphics[height=0.083\textheight]{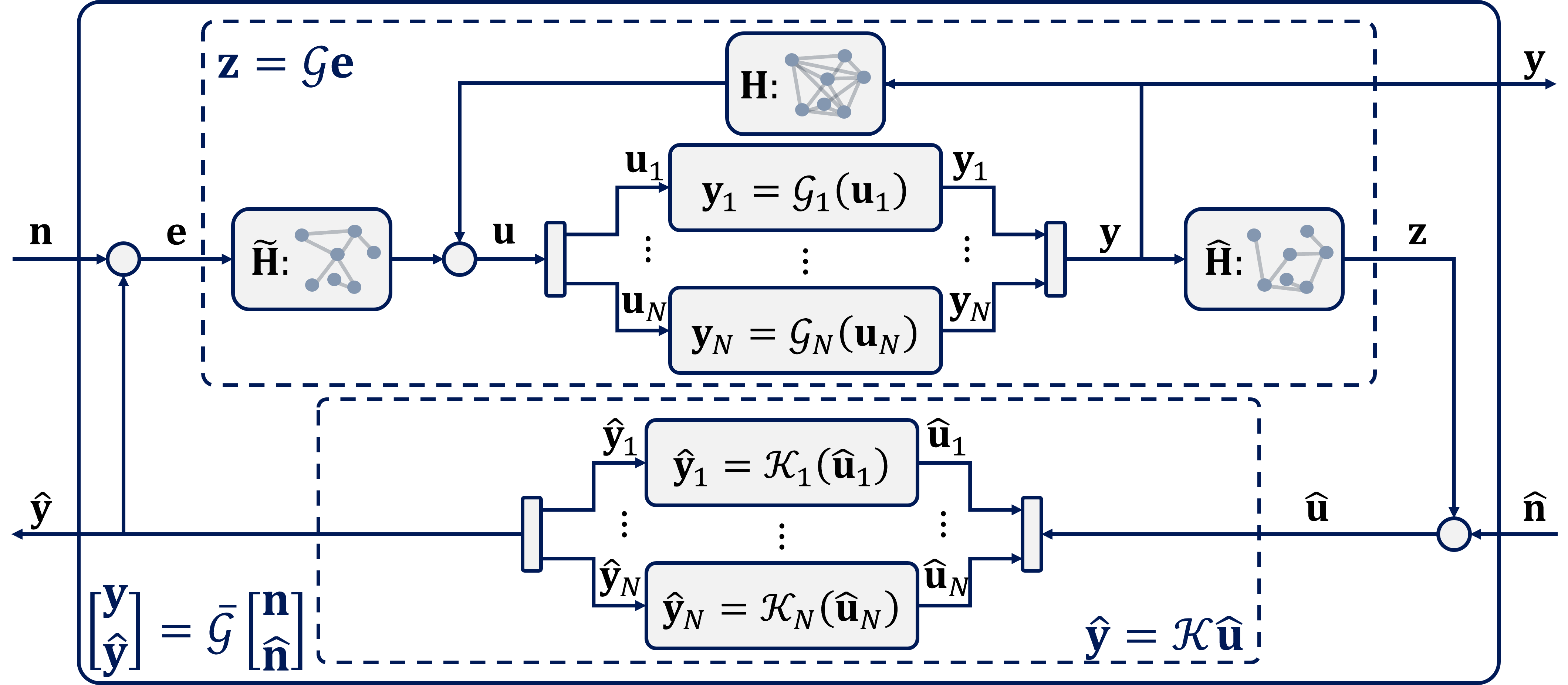} \label{fig1a:global}
    }
    \subfloat[Modular representation]{%
    \includegraphics[height=0.083\textheight]{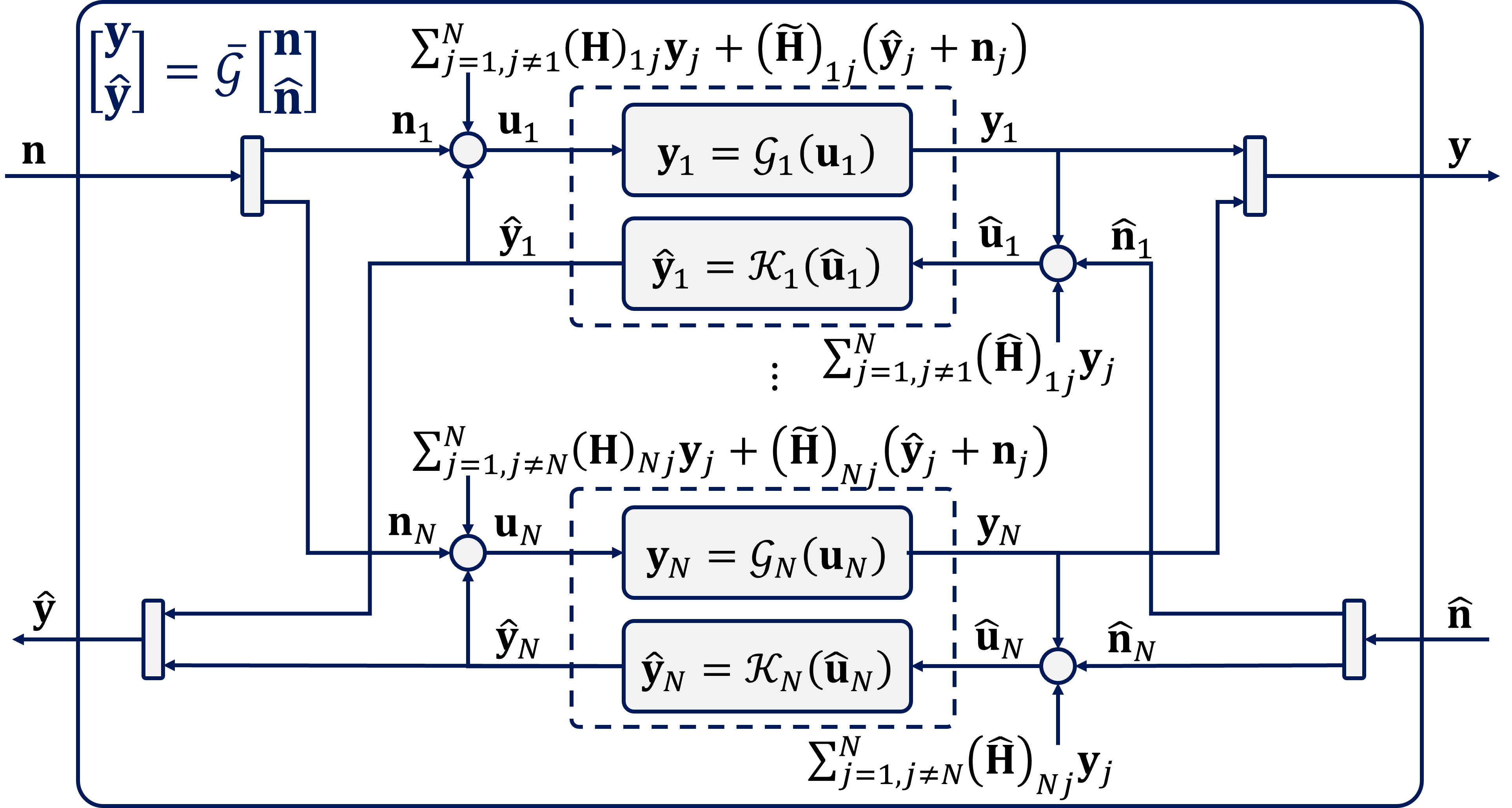} \label{fig1b:module}
    }
    \vspace{-1pt}
    \caption{Two representations of a multi-agent system.}
    \label{fig1:network_system}
    \vspace*{-1.25\baselineskip} 
\end{figure}

% interconnected through $\bH$. The overall network dynamics are described by
% \begin{align} \label{eq:plant}
%     {\setlength{\arraycolsep}{4pt}\begin{array}{rlll}
%         \scrG_i: & \dbx_i=f_i(\bx_i,\bu_i), & \by_i=h_i(\bx_i) \\
%         \scrG: & \dbx=f(\bx,\be),&\bz=h(\bx), & \bu=\be+\bH\by,\;\;\bz=\by
%     \end{array}}
% \end{align}
% where $\bx_i\in\cL_{2e}^{n_i}$, $\bu_i\in\cL_{2e}^{m_i}$, and $\by_i\in\cL_{2e}^{l_i}$ are the states, inputs, and outputs of the $i^\thh$ agent, respectively. $\bu$, $\be$, and $\by$ are stacked vectors defined in \cref{thm:ndt}, where $\be$ represents the exogenous input to the global network. The global network output is denoted by $\bz$, which is equivalent to $\by$.

\subsection{Controller Synthesis}
Consider the performance objective $J_i(\hatbK_i)$ of the $i^\thh$ closed-loop formed by $\scrG_i$ and $\scrK_i$, where $\hatbK_i{=}(\bK_i)_{i=1}^N$ is the tuple of design parameters defining $J_i$, and $\bK_i$ parameterizes $\scrK_i$. 
The optimal performance of $\barscrG$, subject to the $\cL_2$ stability, is formulated as
\begin{subequations} \label{eq:Main Problem}
    \begin{align}
        \min_{\substack{
            \scriptscriptstyle \bK_i,\bX_i,
        }}\quad&\sum_{i\in\bbN_N}J_i(\hatbK_i),&\hspace{-40pt}\forall i\in\bbN_N,\label{eq:Main:objective} \\[-3pt]
        \sthat\quad&\hatbK_i\in\bbJ_i,&\hspace{-40pt}\forall i\in\bbN_N, \label{eq:Main:lyapunov}  \\[-3pt]
            &\bX_i\in\bbG_i,&\hspace{-40pt}\forall i\in\bbN_N, \label{eq:Main:plant} \\[-3pt]
            &(\hatbK_i,\hatbX_i)\in\bbK_i,&\hspace{-40pt}\forall i\in\bbN_N, \label{eq:Main:ctrl} \\[-3pt]
            &(\bX,\hatbX)\in\bbQ,&\hspace{-40pt} \label{eq:Main:stability} 
    \end{align}
\end{subequations}
where $\bX_i{=}(\bQ_i,\bS_i,\bR_i)$ and $\hatbX_i{=}(\hatbQ_i,\hatbS_i,\hatbR_i)$ are 3-tuples denoting dissipativity parameters of $\scrG_i$ and $\scrK_i$, respectively.
$\bX{=}(\bX_i)_{i=1}^N$ and $\hatbX{=}(\hatbX_i)_{i=1}^N$, which are the network-wide tuples necessary for global stability.
$\bbJ_i{=}\{\hatbK_i\,|\,$elements in $\hatbK_i$ satisfy constraints defining $J_i\}$,
$\bbG_i{=}\{\bX_i\,|\,$elements in $\bX_i$ satisfy \cref{eq:QSR Dissipativity}$\}$,
$\bbK_i{=}\{(\hatbK_i{,}\hatbX_i)\,|\,$elements in $\hatbK_i$ and $\hatbX_i$ satisfy \cref{eq:QSR Dissipativity}$\}$, and
$\bbQ{=}\{(\bX,\hatbX)\,|\,$\cref{thm:ndt} holds$\}$.

Typical approaches to solving \cref{eq:Main Problem} consist of first determining $\bX_i$ satisfying \cref{eq:Main:plant}, then finding $\hatbX_i$ satisfying \cref{eq:Main:stability}, and finally solving \cref{eq:Main Problem} with $\bX_i$ and $\hatbX_i$ fixed.
However, this conservatively ignores the infinite variety of dissipativity parameter combinations that could potentially satisfy the stability requirement.
Therefore, co-optimizing parameters in \cref{eq:Main Problem} is essential, rather than fixing them a priori.

Although identifying dissipativity using \cref{eq:Main:plant} may appear nonconvex, various formulations of \cref{def:QSR} provide computationally tractable dissipativity definitions. 
For example, when $\scrG_i$ is a \gls{lti} system, a \gls{lti} system with input/state/output delays, or an input-affine nonlinear system, the \glspl{lmi} in \cref{lem:KYP Lemma}, \cite[Theorem 3.1]{bridgeman2016conic}, or \cite[Theorem 4]{Strong2024IterativeGain} gives systematic frameworks that can be solved via standard \gls{sdp} solvers.
If the dissipativity parameters are known a priori, they can be scaled by a scalar design variable $\lambda_i{\geq}0$ to enhance the flexibility of the synthesis. For example, systems governed by Euler-Lagrange dynamics are inherently dissipative with $(\bzero,\bI,\bzero)$, yielding $(\bQ_i,\bS_i,\bR_i){=}(\bzero,\lambda_i\bI,\bzero)$ for \cref{eq:Main:plant}.

Despite these advantages, solving \cref{eq:Main Problem} directly remains challenging. 
First, centralized computation requires agents to share their internal state-space dynamics, raising privacy concerns.
Second, \cref{eq:Main:ctrl} remains nonconvex due to the coupling of $\hatbK_i$ and $\hatbX_i$, which results in high-order matrix inequalities, rendering the problem NP-hard.
In addition, most constraints related to \cref{eq:Main:lyapunov} introduce additional high-order matrix inequalities, such as those arising in $\cH_2$ or $\cH_\infty$ norm objectives.
This paper overcomes these limitations by integrating distributed optimization with \gls{ico}.

\subsection{Distributed Controller Synthesis}
This section develops an algorithm to solve \cref{eq:Main Problem} in a distributed manner.
Under this framework, agents preserve their privacy by solving local optimal control problems independently, without disclosing their internal dynamics. 
The resulting dissipativity parameters $\bX_i$ and $\hatbX_i$ are shared and iteratively adjusted to satisfy \gls{ndt}.

By setting $\Omega_i{=}\bbJ_i{\cap}\bbG_i{\cap}\bbK_i$ for all $i{\in}\bbN_N$ and $\Psi{=}\bbQ$, \cref{eq:Main Problem} fits the formation in \cref{eq:Constrained Optimization Problem}.
Therefore, \gls{admm} can be applied as
\begin{subequations} \label{eq:admm_controller_synthesis}
    \begin{align}
        \hspace{-5pt}\barbX_i^{k+1}
            &\hspace{-4pt}{=}\arg\!\!\min_{\substack{
                    \scriptscriptstyle \!\!\!\!\!\!\!\bX_i,\hatbX_i,\hatbK_i
                }}J_i(\hatbK_i)
                {+}\frac{\rho}{2}\hspace{-2pt}\brackets{
                \hspace{-6pt}\begin{array}{l}
                    \|\bX_i
                    {-}\bZ_i^k
                    {+}\bT_i^k\|_F^2
                    {+}\|\hatbX_i   
                    {-}\hatbZ_i^k
                    {+}\hatbT_i^k\|_F^2 \\
                    \quad{+}\|\hatbK_i{-}\tilbZ^k{+}\tilbT_i^k\|_F^2
                \end{array}\hspace{-6pt}
                }\hspace{-2pt}{,}  \nonumber\\[-4pt]
            &\hspace{-2pt}\quad\quad\sthat\quad
            \hatbK_i{\in}\bbJ_i,\;\;
            \bX_i{\in}\bbG_i,\;\;
            (\hatbK_i,\hatbX_i){\in}\bbK_i, \label{eq:admm_x Update controller synthesis} \\
        \hspace{-5pt}\barbZ^{k+1}
            &\hspace{-4pt}{=}\arg\!\!\min_{\substack{
                    \scriptscriptstyle \!\!\!\!\!\!\!\bZ,\hatbZ,\tilbZ
                }}\sum_{i\in\bbN_N}\brackets{
                \hspace{-6pt}\begin{array}{l}
                    \|\bX_i^{k+1}
                    {-}\bZ_i
                    {+}\bT_i^k\|_F^2
                    {+}\|\hatbX_i^{k+1}   
                    {-}\hatbZ_i^k
                    {+}\hatbT_i^k\|_F^2 \\
                    \quad{+}\|\hatbK_i^{k+1}{-}\tilbZ{+}\tilbT_i^k\|_F^2
                \end{array}\hspace{-6pt}
                }\hspace{-2pt}, \nonumber\\[-4pt]
            &\hspace{-2pt}\quad\quad\sthat\quad
            (\bZ,\hatbZ)\in\bbQ, \label{eq:admm_z Update controller synthesis} \\
            \begin{split}
                \hspace{-5pt}\bT_i^{k+1}
                    &\hspace{-4pt}{=}\bT_i^k{+}(\bX_i^{k+1}
                        {-}\bZ_i^{k+1}),\;\;
                \hatbT_i^{k+1}
                    {=}\hatbT_i^k{+}(\hatbX_i^{k+1}
                        {-}\hatbZ_i^{k+1}) \\
                \hspace{-5pt}\tilbT_i^{k+1}
                    &\hspace{-4pt}{=}\tilbT_i^k{+}(\hatbK_i^{k+1}
                        {-}\tilbZ^{k+1})
            \end{split}
                \label{eq:admm_t Update controller synthesis}
    \end{align}
\end{subequations}
where $\barbX_i{=}(\bX_i,\hatbX_i,\hatbK_i)$, $\barbZ{=}(\bZ,\hatbZ,\tilbZ)$. 
Here, $\bZ{=}(\bZ_i)_{i=1}^N$ and $\hatbZ=(\hatbZ_i)_{i=1}^N$.
$\bT_i$, $\hatbT_i$, and $\tilbT_i$ are dual tuples corresponding to each agent.
\cref{alg:01} describes the full iterative process of using \cref{eq:admm_controller_synthesis}. 
The solution of \cref{eq:admm_x Update controller synthesis} without a penalty term can be used as the initial points $\barbX_i^0$.
$\barbZ{=}(\bX_i^0,\hatbX_i^0,\hatbK_i^0)$, $\bT_i^0{=}\hatbT_i^0{=}(\bzero,\bzero,\bzero)$, and $\tilbT_i^0{=}\bzero$ can be used as the initial consensus and dual variables, respectively. 

Since \cref{eq:Main Problem} is a nonconvex problem, \cref{eq:admm_controller_synthesis} does not guarantee convergence.
Even when it does converge, it converges to a local optimal point \cite{boyd2011distributed}.
However, as noted in \cite{lin2013design}, \gls{admm} works well when $\rho$ is sufficiently large. We have observed that a local closed-loop stability constraint can be incorporated in \cref{eq:admm_x Update controller synthesis} to improve the convergence rate.It is $(\bX_i,\hatbX_i){\in}\bbS_i$, where
\begin{align} \label{eq:closed_loop_stability_constraint}
    \bbS_i=\set*{(\bX_i,\hatbX_i)\,\bigg|\,\begin{bmatrix}
        \bQ_i+\hatbR_i & \bS_i+\hatbS_i^T \\
        \bS_i^T+\hatbS_i & \bR_i+\hatbQ_i
    \end{bmatrix}\prec0}.\;\;%\forall i\in\bbN_N.
\end{align}

% \begin{remark}
%     When calculating \cref{eq:admm_x Update controller synthesis} for different agents, $\hatbK_a$ of $a^\thh$ agents and $\hatbK_b$ of $b^\thh$ agents are different even though there are overlapped variables, $\bK_c$, between them in \cref{eq:Main Problem}.
%     Nevertheless, these two different values converge to the same value as \gls{admm} converges, since $\hatbK_a=\tilbZ=\hatbK_b$ after the convergence.
% \end{remark}

% \begin{remark}
%     To improve the convergence rate, a local closed-loop stability constraint can be incorporated in \cref{eq:admm_x Update controller synthesis}, which is $(\bX_i,\hatbX_i){\in}\bbS_i$, where
%     \begin{align} \label{eq:closed_loop_stability_constraint}
%         \bbS_i=\set*{(\bX_i,\hatbX_i)\,\bigg|\,\begin{bmatrix}
%             \bQ_i+\hatbR_i & \bS_i+\hatbS_i^T \\
%             \bS_i^T+\hatbS_i & \bR_i+\hatbQ_i
%         \end{bmatrix}\prec0},\;\;%\forall i\in\bbN_N.
%     \end{align}
% \end{remark}

\begin{algorithm} [tbp]
    \caption{Distributed Controller Synthesis}\label{alg:01}
    \begin{algorithmic}[1]
        \Require $\bX_i^0,\hatbX_i^0,\hatbK_i^0,\bZ_i^0,\hatbZ_i^0,\bT_i^0,\hatbT_i^0$ for $i\in\bbN_N$, $\epsilon_p$, and $\epsilon_d$
        \State Initialize $k=0$
        \Ensure $\hatbK_i^k$
        \While {$\bX\notin\bbQ$ or $r_p>\epsilon_p$ or $r_d>\epsilon_d$}
            \State $k\gets k+1$
            \State Find $\bX_i^k,\hatbX_i^k,\hatbK_i^k$ by \cref{eq:admm_x Update controller synthesis} in parallel
            \State Find $\bZ^k,\hatbZ^k,\tilbZ^k$ by \cref{eq:admm_z Update controller synthesis} at a centralized node
            \State Find $\bT_i^k,\hatbT_i^k,\tilbT_i^k$ by \cref{eq:admm_t Update controller synthesis} in parallel
        \EndWhile
        \State $\hatbK_i^\star=\hatbK_i^k$ for all $i\in\bbN_N$
    \end{algorithmic}
\end{algorithm}

\subsection{\texorpdfstring{$\bX_i$}{Xi} Update by \texorpdfstring{\gls{ico}}{ico}}
Directly implementing \cref{alg:01} is computationally intractable, as step 4 involves a nonconvex problem.
While various approaches could be employed, this work leverages \gls{ico} introduced in \cref{subChap:convex_overbounding} since it effectively mitigates the conservatism of solving a nonconvex problem, if initial feasible solutions are accessible. Conveniently, these are provided by the preceding \gls{admm} iteration. 
Specifically, at the $k^\thh$ \gls{admm} iteration, we set $\hatbK_i^k$, $\bX_i^k$, and $\hatbX_i^k$ as the starting points, $\hatbK_i^{k,0}$, $\bX_i^{k,0}$, and $\hatbX_i^{k,0}$, respectively, and then apply \gls{ico} following procedures in \cref{subChap:convex_overbounding}.
At each $l^\thh$ iteration of \gls{ico}, the high-order matrix inequalities corresponding to $\bbJ_i$ and $\bbK_i$ are tightened into \gls{lmi} constraints $\bbJ_i^{k,l}$ and $\bbK_i^{k,l}$, by repeatedly applying \cref{thm:overbounding} with the current iterates $\hatbK_i^{k,l}$ and $\hatbX_i^{k,l}$.
Consequently, the convex optimization problem solved at the $l^\thh$ iteration is
\begin{subequations} \label{eq:ico_x Update}
    \begin{align}
        \hspace{-5pt}\arg\!\!\!\!\min_{\substack{
                \scriptscriptstyle\!\!\!\!\!\!\!\!\!\!\!\!\dhatbK_i, \dbX_i,\dhatbX_i
            }}\;&J_i(\dhatbK_i)
            +\frac{\rho}{2}\brackets{\hspace{-5pt}\begin{array}{l}
                \|\bX_i^{k,l}{+}\dbX_i-\bZ_i^k{+}\bT_i^k\|_F^2 \\
                \;\;{+}\|\hatbX_i^{k,l}{+}\dhatbX_i{-}\hatbZ_i^k{+}\hatbT_i^k\|_F^2  \\
                \;\;{+}\|\hatbK_i^{k,l}{+}\dhatbK_i{-}\tilbZ^k{+}\tilbT_i^k\|_F^2 
            \end{array}
            \hspace{-5pt}},\hspace{-5pt} \label{eq:ico_x update_objective} \\[-2pt]
        \begin{split}
        \hspace{-5pt}\sthat\;\;\;\;&
        \dhatbK_i{\in}\bbJ_i^{k,l},\;\;
        \bX_i^{k,l}{+}\dbX_i{\in}\bbG_i,\;\;
        (\dhatbK_i,\dhatbX_i){\in}\bbK_i^{k,l}.    
        \end{split}\hspace{-5pt}
         \label{eq:ico_x Update constraint} 
    \end{align}
\end{subequations}
This \gls{ico} procedure, which serves as the internal solver for step 4 of \cref{alg:01}, is summarized in \cref{alg:02}.

\begin{algorithm} [tbp]
    \caption{Iterative Convex Overbounding}\label{alg:02}
    \begin{algorithmic}[1]
        \Require $\bX_i^k$, $\hatbX_i^k$, $\hatbK_i^k$ and $\epsilon_d$
        \Ensure $\bX_i^{k+1}$, $\hatbX_i^{k+1}$, $\hatbK_i^{k+1}$
        \State Initialize $l{=}0$, $\bX_i^{k,l}{=}\bX_i^k$, and $\hatbX_i^{k,l}{=}\hatbX_i^k$, $\hatbK_i^{k,l}{=}\hatbK_i^k$
        \While {$\frac{|J_i(\hatbK_i^{k,l})-J_i(\hatbK_i^{k,l-1})|}{|J_i(\hatbK_i^{k,l})|}\geq\epsilon$}
            \State $l\gets l{+}1$
            \State Construct constraints $\bbJ_i^{k,l}$ and $\bbK_i^{k,l}$
            \State Find $\dbX^\star$, $\dhatbX^\star$, $\dhatbK^\star$ by \cref{eq:ico_x Update}
            \State $\bX_i^{k,l}\hspace{-2pt}{=}\bX_i^{k,l{-}1}\hspace{-4pt}{+}\dbX^\star$, $\hatbX_i^{k,l}\hspace{-2pt}{=}\hatbX_i^{k,l{-}1}\hspace{-4pt}{+}\dhatbX^\star$, $\hatbK_i^{k,l}\hspace{-2pt}{=}\hatbK_i^{k,l{-}1}\hspace{-4pt}{+}\dhatbK^\star$
        \EndWhile
        \State $\bX_i^{k+1}{=}\bX_i^{k,l}$, $\hatbX_i^{k+1}{=}\hatbX_i^{k,l}$, $\hatbK_i^{k+1}{=}\hatbK_i^{k,l}$
    \end{algorithmic}
\end{algorithm}

\subsection{Initialization}
Although \gls{admm} provides warm-start points for \gls{ico} during iterative process, an initial feasible solution is still required to commence \cref{alg:01}.
The technique in \cite[Section 6]{locicero2025dissipativity} can be used if all agents are inherently open-loop stable.
Otherwise, $\scrK_i$ can be determined by the standard optimal control methods or the iterative relaxation approach in \cite{warner2017iterative}.
An additional advantage of the proposed distributed framework is the relaxation of initialization requirements.
In \cref{eq:Main Problem}, one must identify the initial feasible points satisfying all constraints including a global constraint, which is a numerically daunting task.
In contrast, our approach only requires locally feasible points.

% The proposed approach additionally gives specific advantage in finding initial feasible solutions, as it requires to find the feasible points satisfying only local constraints.
% If we solve \cref{eq:Main Problem} directly, the initial feasible points satisfying the global constraint is necessary, which is much harder to find the initial feasible points only satisfying local constraints.

\section{Application: Full-State Feedback}
As an example, the distributed synthesis method is applied to a network of agents with heterogeneous performance specifications, specifically either minimizing the $\cH_2$ or $\cH_\infty$ norm measured from disturbances to their respective linearized dynamics under full-state feedback. For this application, we assume noise-free measurements, where $\hatbn{=}\bzero$ for all $i{\in}\bbN_N$.

\subsection{Agent Level Linearized Dynamics}
In this case, each agent uses its linearized dynamics $\scrG_i^{lti}$ and full-state feedback controller $\scrK_i$, represented as
\begin{align*}
    \scrG_i^{lti}:\dbx_i{=}\bA_i\bx_i{+}\bB_i\bu_i,\;\by_i=\bx_i,\quad
    \scrK_i:\hatby_i{=}-\bK_i\hatbu_i.
\end{align*}
Given the network in \cref{eq:io_global_network}, the closed-loop dynamics of $\scrG_i^{lti}$ and $\scrK_i$ follows the following lemma.
\begin{lemma}
    Under the global network configuration in \cref{eq:io_global_network}, the transfer function from the disturbance input $\bn$ to the output $[\by_i^T\;\hatby_i^T]$ for the $i^\thh$ closed-loop subagent is given by the state-space realization $(\bA_i^{cl}{,}\bB_i^{cl}{,}\bC_i^{cl}{,}\bzero)$, where
    \begin{align} \label{eq:closed-loop parameters}
        \hspace{-5pt}\bA_i^{cl}{=}\bA_i{-}\hspace{-3pt}\sum_{j=1}^N\hspace{-2pt}\bB_i(\hspace{-.75pt}\tilbH\hspace{-.75pt})_{i\hspace{-1pt}j}\bK_j\hspace{-1pt}(\hspace{-.75pt}\hatbH\hspace{-.75pt})\hspace{-1.5pt}_{ji},
        \bB_i^{cl}{=}\bB_i(\hspace{-.75pt}\tilbH\hspace{-.75pt})_i,
        \bC_i^{cl}{=}\hspace{-2pt}\begin{bmatrix}
            \bI \\ \hspace{-1pt}{-}\bK_i(\hspace{-.75pt}\hatbH\hspace{-.75pt})_{ii}\hspace{-1pt}
        \end{bmatrix}\hspace{-2pt}{,}\hspace{-5pt}
    \end{align}
    and $(\tilbH)_i$ denotes block $i^\thh$ row of $\tilbH$.
\end{lemma}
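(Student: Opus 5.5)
The plan is a direct substitution argument. We write out the interconnection \cref{eq:io_global_network} row by row for agent $i$ with $\hatbn{=}\bzero$, insert the controller law and the output map, and read off the realization.

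First, take the $i^\thh$ block row of the first equation in \cref{eq:io_global_network}:
\begin{align*}
\bu_i=\sum_{j}(\bH)_{ij}\by_j+\sum_j(\tilbH)_{ij}\hatby_j+(\tilbH)_i\bn .
\end{align*}
The second block row with $\hatbn{=}\bzero$ gives $\hatbu_j=\sum_m(\hatbH)_{jm}\by_m$. Substituting $\scrK_j$ yields
\begin{align*}
\hatby_j=-\bK_j\sum_m(\hatbH)_{jm}\bx_m ,
\end{align*}
using $\by_m=\bx_m$.

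The stated realization is the agent-level (decentralized) closed loop. It keeps only the feedback path that returns to agent $i$'s own state, namely the $m{=}i$ terms. The inter-agent couplings $(\bH)_{ij}\by_j$ and the cross terms $(\hatbH)_{jm}\bx_m$ with $m{\neq}i$ are treated as belonging to other subagents' dynamics, i.e.\ they are handled at the network level by the \gls{ndt}. I would state this convention explicitly at the start of the proof, since it is what makes the formula local.

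With that convention, $\bu_i$ reduces to
\begin{align*}
\bu_i=-\sum_j(\tilbH)_{ij}\bK_j(\hatbH)_{ji}\bx_i+(\tilbH)_i\bn .
\end{align*}
Plugging this into $\dbx_i=\bA_i\bx_i+\bB_i\bu_i$ immediately gives $\bA_i^{cl}$ and $\bB_i^{cl}$ as in \cref{eq:closed-loop parameters}.

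For the output, stack $\by_i=\bx_i$ and $\hatby_i=-\bK_i\hatbu_i$. Under the same local convention, $\hatbu_i=(\hatbH)_{ii}\bx_i$, which gives $\bC_i^{cl}$. There is no direct feedthrough from $\bn$, since neither $\by_i$ nor $\hatby_i$ depends instantaneously on $\bn$ once $\hatbn{=}\bzero$; hence the $\bzero$ $D$-matrix. Taking Laplace transforms with zero initial state then gives the transfer function $\bC_i^{cl}(s\bI-\bA_i^{cl})^{-1}\bB_i^{cl}$.

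The main obstacle is justifying the localization step: explaining why the coupling terms through $\bH$ and the off-diagonal $\hatbH$ blocks drop out of agent $i$'s realization. I would argue it in one of two ways:
\begin{itemize}
\item structurally, by noting that in full-state feedback $\hatbH$ routes each state to its own controller, and that agents do not measure one another directly in the decentralized design;
\item definitionally, by treating those signals as exogenous inputs to subagent $i$ that are accounted for by the network-level dissipativity constraint.
\end{itemize}
The remaining algebra is routine block-matrix bookkeeping.
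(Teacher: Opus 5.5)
Your proposal is essentially the paper's proof: substitute \cref{eq:io_global_network} into agent $i$'s dynamics, separate the terms in $\bx_i$ from the coupling terms in $\bx_k$, $k{\neq}i$, and take the former as the ``effective'' realization, a localization step the paper also makes without further justification. Only your second (definitional) justification for that step holds in general: the structural one would force $(\hatbH)_{ji}{=}\bzero$ for $j{\neq}i$, collapsing the very sum the lemma keeps (the paper's own example has a non-block-diagonal $\hatbH$), and it would still not remove the $(\bH)_{ij}\by_j$ terms.
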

\begin{proof}
    The global closed-loop dynamics of the interconnected system are described by
    \begin{align*}
        \dbx{=}(\bA_d{+}\bB_d\bH{-}\bB_d\tilbH\bK_d\hatbH)\bx{+}\bB_d\tilbH\bn,\;\;\;\;
        \by{=}\bx,\;\;\;\;
        \hatby{=}{-}\bK_d\hatbH\bx,
    \end{align*}
    where $\bA_d{=}\diag(\bA_i)_{i=1}^N$, and $\bB_d$, $\bK_d$, and $\bC_d$ are defined analogously.
    Since $(\bH)_{ii}{=}\bzero$, $i^\thh$ agent follows
    \begin{align*}
        \dbx_i=&\Big(\bA_i-\bB_i\hspace{-5pt}\sum_{j\in\bbN_N}\hspace{-5pt}(\tilbH)_{ij}\bK_j(\hatbH)_{ji}\Big)\bx_i+\bB_i(\tilbH)_i\bn \\[-3pt]
            &+\bB_i\hspace{-10pt}\sum_{j\in\bbN_N,j\neq i}\hspace{-10pt}(\bH)_{ij}\bx_j
            -\bB_i\hspace{-15pt}\sum_{j,k\in\bbN_N,k\neq i}\hspace{-13pt}(\tilbH)_{ij}\bK_j(\hatbH)_{jk}\bx_k \\
        \by_i{=}&\bx_i,\qquad\hatby_i{=}{-}\bK_i(\hatbH)_{ii}\bx_i{-}\bK_i\hspace{-10pt}\sum_{j\in\bbN_N,j\neq i}\hspace{-10pt}(\hatbH)_{ij}\bx_j.
    \end{align*}
    % \begin{align*}
    %     \dbx_i=&\Big(\bA_i-\bB_i\hspace{-5pt}\sum_{j\in\bbN_N}\hspace{-5pt}\tilbH_{ij}\bK_j\hatbH_{ji}\Big)\bx_i+\bB_i(\tilbH)_i\bn \\[-3pt]
    %         &+\bB_i\hspace{-10pt}\sum_{j\in\bbN_N,j\neq i}\hspace{-10pt}\bH_{ij}\bx_j
    %         -\bB_i\hspace{-15pt}\sum_{j,k\in\bbN_N,k\neq i}\hspace{-13pt}\tilbH_{ij}\bK_j\hatbH_{jk}\bx_k \\
    %     \by_i{=}&\bx_i,\qquad\hatby_i{=}{-}\bK_i\hatbH_{ii}\bx_i{-}\bK_i\hspace{-10pt}\sum_{j\in\bbN_N,j\neq i}\hspace{-10pt}\hatbH_{ij}\bx_j.
    % \end{align*}
    Therefore, the effective closed-loop parameters considering the mapping from $\bn$ to outputs follows \cref{eq:closed-loop parameters}.
    % Since the system is \gls{lti}, the transfer function from $\bn$ to the output is defined by parameters in \cref{eq:closed-loop parameters}.
\end{proof}

Since the transfer function depends on $\bK_d$, the local design parameter tuple $\hatbK_i$ is equivalent to $(\bK_i)_{i=1}^N$.
From \cref{lem:KYP Lemma},
% \begin{align} \label{eq:ctrl_kyp_constraint}
%     \bbK_i{=}\set{\hspace{-1pt}(\hspace{-1pt}\hatbK_i{,}(\hatbQ_i{,}\hatbS_i{,}\hatbR_i\hspace{-1pt})\hspace{-1pt})|{-}\hatbR_i{+}\hatbS_i^T\bK_i{+}\bK_i^T\hatbS_i{-}\bK_i^T\hatbQ_i\bK_i{\preceq}0},\hspace{-5pt}
% \end{align}
\begin{align} \label{eq:ctrl_kyp_constraint}
    \bbK_i{=}\set{(\hatbK_i{,}\hatbX_i)|{-}\hatbR_i{+}\hatbS_i^T\bK_i{+}\bK_i^T\hatbS_i{-}\bK_i^T\hatbQ_i\bK_i{\preceq}0},
\end{align}
For an agent $i$ targeting $\cH_2$-norm minimization, $J_i(\hatbK_i)=\tr(\bW_i)$, where $\bW_i{\succ}0$ is a slack variable satisfying
\begin{align} \label{eq:H2_constraint}
    \hspace{-5pt}\bbJ_i{=}\set*{\hspace{-2pt}\hatbK_i\Big|{\exists}\bP_i{\succ}0\,\sthat
    {\setlength{\arraycolsep}{0pt}\begin{bmatrix}
        \he(\bP_i\bA_i^{cl}) & * \\
        \bC_i^{cl} & {-}\bI
    \end{bmatrix}}
    {\preceq}0{,}(\hspace{-.5pt}\bB_i^{cl}\hspace{-1pt})^T\hspace{-1pt}\bP_i\bB_i^{cl}{\prec}\bW_i
    \hspace{-1pt}}.\hspace{-5pt}
\end{align}
Conversely, for an agent $i$ targeting $\cH_\infty$-norm minimization, $J_i(\hatbK_i){=}\gamma_i$ from some $\gamma_i{\geq}0$ such that
\begin{align}\label{eq:Hinf_constraint}
    \bbJ_i{=}\set*{\hspace{-1pt}\hatbK_i\,\Big|\,{\exists}\bP_i{\succ}0\,\sthat
    {\setlength{\arraycolsep}{3pt}\begin{bmatrix}
        \he(\bP_i\bA_i^{cl}) & * & * \\
        (\bP_i\bB_i^{cl})^T & {-}\gamma_i\bI & * \\
        \bC_i^{cl} & \bzero & -\gamma_i\bI
    \end{bmatrix}}
    {\prec}0
    \hspace{-1pt}}.
\end{align}

\subsection{Tightening Constraints}
\cref{cor:lmi_kyp,cor:lmi_H2,cor:lmi_Hinf} provide the tight constraints of $\bbK_i$ and $\bbJ_i$. Next, we use $\bA_i^{cl,0}{=}\bA_i{-}\sum_{j=1}^N\bB_i(\tilbH)_{ij}\bK_j^0(\hatbH)_{ji}$, $\dbA_i^{cl}{=}{-}\sum_{j=1}^N\bB_i(\tilbH)_{ij}\dbK_j(\hatbH)_{ji}$, $\bC_i^{cl,0}{=}[\bI\;{-}(\bK_i^0(\bH)_{ii})^T]^T$ and $\dbC_i^{cl}=[\bzero\;{-}(\dbK_i(\bH)_{ii})^T]^T$.
\begin{corollary} \label{cor:lmi_kyp}
    Given $\hatbK_i^0,\hatbQ_i^0,\hatbS_i^0$, and $\hatbR_i^0$, if there exist $\dhatbK_i$, $\dhatbQ_i,\dhatbS_i$, and $\dhatbR_i$ such that $(\dhatbK_i{,}\hatbX_i){\in}\bbK_i^c$, where 
    \begin{align} \label{eq:lmi_kyp}
    \hspace{-5pt}\bbK_i^c{=}\set*{\hspace{-7pt}\begin{array}{l}
            (\dhatbK_i,\dhatbX_i)\big| \\
            \;{\setlength{\arraycolsep}{1pt}\begin{bmatrix}
                {-}\cT_0^1(\hatbR_i{-}\he(\hatbS_i^T\hspace{-1pt}\bK_i){+}\bK^T_i\hspace{-1pt}\hatbQ_i\bK_i) & * & * \\
                \dhatbS_i{-}\frac{1}{2}\hatbQ_i^0\dbK_i{+}\dbK_i & \hspace{-3pt}-2\bI & * \\
                {-}\dhatbQ_i\bK_i^0+\dbK & \hspace{-10pt}{-}\frac{1}{2}\dhatbQ_i & -2\bI
            \end{bmatrix}
            }{\preceq}0
        \end{array}\hspace{-7pt}
        },\hspace{-5pt}
    \end{align}
    then $(\hatbK_i^0{+}\dhatbK_i,\hatbX_i^0{+}\dhatbX_i){\in}\bbK_i$. Moreover, \cref{eq:lmi_kyp} is feasible if $\hatbK_i^0$, $\hatbQ_i^0$, $\hatbS_i^0$, and $\hatbR_i^0$ are feasible for \cref{eq:ctrl_kyp_constraint}.
\end{corollary}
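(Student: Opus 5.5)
The plan is to write the constraint in \cref{eq:ctrl_kyp_constraint} at the perturbed point and expand it exactly in the increments. Set $\bK_i{=}\bK_i^0{+}\dbK_i$, $\hatbQ_i{=}\hatbQ_i^0{+}\dhatbQ_i$, $\hatbS_i{=}\hatbS_i^0{+}\dhatbS_i$, $\hatbR_i{=}\hatbR_i^0{+}\dhatbR_i$. The matrix ${-}\hatbR_i{+}\he(\hatbS_i^T\bK_i){-}\bK_i^T\hatbQ_i\bK_i$ then splits into three parts:
\begin{itemize}
\item a part that is affine in the increments, namely its first-order Taylor expansion, i.e.\ ${-}\cT_0^1(\hatbR_i{-}\he(\hatbS_i^T\bK_i){+}\bK_i^T\hatbQ_i\bK_i)$;
\item the second-order terms $\he(\dhatbS_i^T\dbK_i){-}\he(\dbK_i^T\dhatbQ_i\bK_i^0){-}\dbK_i^T\hatbQ_i^0\dbK_i$;
\item the single third-order term ${-}\dbK_i^T\dhatbQ_i\dbK_i$.
\end{itemize}

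The key step is to rewrite all nonlinear remainders in the form $\he(\bX\bN\bY)$ of \cref{eq:bmi}, or in a form that a Schur complement can absorb. I would group them as $\he\big((\dhatbS_i{-}\tfrac12\hatbQ_i^0\dbK_i)^T\dbK_i\big)$ plus ${-}\he\big((\dhatbQ_i\bK_i^0)^T\dbK_i\big)$ plus the cubic term. The paper's choice of blocks, with second-row entry $\dhatbS_i{-}\frac12\hatbQ_i^0\dbK_i{+}\dbK_i$ and third-row entries ${-}\dhatbQ_i\bK_i^0{+}\dbK_i$ and ${-}\frac12\dhatbQ_i$, indicates how to proceed. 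Apply \cref{thm:overbounding} with $\bG{=}\bI$ twice, once to each bilinear product $\he(\bX_1^T\dbK_i)$ and $\he(\bX_2^T\dbK_i)$, using $\he(\bX^T\bY)\preceq\frac12(\bX{+}\bY)^T(\bX{+}\bY)$ type bounds. Equivalently, use the elementary inequality $\he(\bX^T\bY)\preceq\frac12(\bX{+}\bY)^T(\bX{+}\bY)$, valid because the difference is $-\frac12(\bX{-}\bY)^T(\bX{-}\bY)\preceq0$. This bound is what produces the $-2\bI$ diagonal blocks after a Schur complement.

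The cubic term would be handled by noting that the third row couples back to the second. The $-\frac12\dhatbQ_i$ off-diagonal entry, once Schur-complemented against the two $-2\bI$ blocks, should generate a cross term. That cross term is meant to reproduce ${-}\dbK_i^T\dhatbQ_i\dbK_i$, or bound it in a definite direction.

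After that, take Schur complements of the $3{\times}3$ block matrix in \cref{eq:lmi_kyp} with respect to the lower-right $2{\times}2$ block, which is negative definite. This shows that the $(1,1)$ affine part plus $\frac12$ times the sum of the squared overbounding terms, plus the cross term, is $\preceq0$. Then chain with the inequalities above to obtain ${-}\hatbR_i{+}\he(\hatbS_i^T\bK_i){-}\bK_i^T\hatbQ_i\bK_i\preceq0$ at the perturbed point. That is precisely $(\hatbK_i^0{+}\dhatbK_i,\hatbX_i^0{+}\dhatbX_i){\in}\bbK_i$.

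Feasibility is the easy part. Setting all increments to zero makes the off-diagonal blocks vanish, and the $(1,1)$ block becomes the original constraint at the initial point. That block is $\preceq0$ by assumption, and the diagonal blocks $-2\bI$ are negative, so the zero increment lies in $\bbK_i^c$.

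The main obstacle I expect is the bookkeeping of the cubic term ${-}\dbK_i^T\dhatbQ_i\dbK_i$ and the exact constants ($\frac12$, $-2\bI$) so that the Schur complement genuinely dominates every remainder. If the $-\frac12\dhatbQ_i$ coupling does not close the argument directly, my fallback is to treat $\dhatbQ_i\dbK_i$ as one factor and apply \cref{thm:overbounding} again in a nested way, which amounts to a second overbounding step.
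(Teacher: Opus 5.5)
Your exact expansion and your feasibility argument are correct. The step that actually carries the corollary, absorbing the cubic term $-\dbK_i^T\dhatbQ_i\dbK_i$, is left open, and your primary route does not supply it.

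With your grouping ($\hatbQ_i^0$ inside $\dhatbS_i-\frac12\hatbQ_i^0\dbK_i$), overbounding the two bilinear products gives a $3\times3$ matrix whose $(3,2)$ entry is $\bzero$. Its $(1,1)$ entry still contains $-\dbK_i^T\dhatbQ_i\dbK_i$, which is not affine, and nothing in that grouping produces the $-\frac12\dhatbQ_i$ entry.

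A Schur complement of the stated matrix against its lower-right block also does not give what you claim. That block is coupled, so its inverse is not $-\frac12\bI$, and the result is not ``$\frac12$ times the squared overbounding terms plus a cross term.'' Already at first order in $\dhatbQ_i$ the correction is $-\frac18\he(E_1^T\dhatbQ_iE_2)$, where $E_1,E_2$ are the $(2,1)$ and $(3,1)$ entries. This contains $-\frac14\dbK_i^T\dhatbQ_i\dbK_i$ mixed with unrelated terms, not the cubic term itself. Showing that this Schur complement dominates the nonlinear remainder is the whole content of the corollary, and the proposal gives no mechanism for it.

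The missing idea is what the paper means by applying \cref{thm:overbounding} \emph{sequentially}: keep $\hatbQ_i$ unexpanded in the first step.

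\textbf{First overbound.} Substitute only $\bK_i=\bK_i^0+\dbK_i$ and $\hatbS_i=\hatbS_i^0+\dhatbS_i$. The nonlinear part is then $\he((\dhatbS_i-\frac12\hatbQ_i\dbK_i)^T\dbK_i)$. One overbound with $\bG=\bI$ gives a $2\times2$ LMI with $(2,1)$ entry $\dhatbS_i-\frac12\hatbQ_i\dbK_i+\dbK_i$ and $(2,2)$ entry $-2\bI$.

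\textbf{Second overbound.} Now substitute $\hatbQ_i=\hatbQ_i^0+\dhatbQ_i$. The only non-affine pieces are $-\he((\bK_i^0)^T\dhatbQ_i\dbK_i)$ in the $(1,1)$ block and $-\frac12\dhatbQ_i\dbK_i$ in the $(2,1)$ block; the cubic term now lives in the latter as a bilinear entry. Together these form a single term $\he(\mathcal{X}\mathcal{Y})$ on the whole $2\times2$ matrix, with $\mathcal{X}=\col(-(\bK_i^0)^T\dhatbQ_i,\,-\frac12\dhatbQ_i)$ and $\mathcal{Y}=[\dbK_i\;\;\bzero]$. A second overbound with $\bG=\bI$ appends the row $[\,-\dhatbQ_i\bK_i^0+\dbK_i\;\;-\frac12\dhatbQ_i\;\;-2\bI\,]$. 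That is exactly the stated LMI, and it is where the $-\frac12\dhatbQ_i$ comes from.

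The implication is then proved by undoing the two overbounds one $-2\bI$ block at a time. Each undo is a plain Schur complement followed by $\he(\bX^T\bY)\preceq\frac12(\bX+\bY)^T(\bX+\bY)$; you never need to invert the coupled block.

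Your fallback, treating $\dhatbQ_i\dbK_i$ as one factor and overbounding again, points in this direction but is not carried out. It is this specific augmented factorization, applied to the first-stage $2\times2$ matrix, that makes the constants and the $(3,2)$ entry come out right.
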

\begin{proof}
    The proof follows by applying \cref{eq:overbounding_sebe} of \cref{thm:overbounding} sequentially with $\bG{=}\bI$. First, use $\bK_i{=}\bK_i^0{+}\dbK_i$ and $\hatbS_i{=}\hatbS_i^0{+}\dhatbS_i$. Next, use $\hatbQ_i{=}\hatbQ_i^0{+}\dhatbQ_i$.
\end{proof}

\begin{corollary} \label{cor:lmi_H2}
    Given $\hatbK_i^0$ and $\bP_i^0$, if there exist $\dhatbK_i$ and $\dbP_i$ such that $\dhatbK_i{\in}\bbJ_i^c$, where 
    \begin{align} \label{eq:lmi_H2}
    \hspace{-5pt}\bbJ_i^c{=}\set*{\hspace{-7pt}\begin{array}{l}
            \dhatbK_i\,\big|\,\exists\dbP_i\;\sthat\;\bP_i^0{+}\dbP_i{\succ}0, \\
            \hspace{-2pt}{\setlength{\arraycolsep}{1pt}\begin{bmatrix}
            \hspace{-1.5pt}\he(\cT_0^1(\bP_i\bA_i^{cl})\hspace{-1.5pt}) & * & * \\
            \bC_i^{cl,0}{+}\dbC_i^{cl} & \hspace{-5pt}-\bI & \bzero \\
            \dbP_i{+}\dbA_i^{cl} & \hspace{-1pt}\bzero & \hspace{-3pt}-2\bI\hspace{-1.5pt}
        \end{bmatrix}}\hspace{-2pt}{\preceq}0,\hspace{-2pt}
        {\setlength{\arraycolsep}{1pt}\begin{bmatrix}
            {-}\bW_i & * \\
            \hspace{-1.5pt}\bP_i^0\bB_i^{cl} & {-}\bP_i^0{+}\dbP_i\hspace{-1.5pt}
        \end{bmatrix}}\hspace{-2pt}{\preceq}0
        \end{array}\hspace{-7pt}
        },\hspace{-5pt}
    \end{align}
    then $\hatbK_i^0{+}\dhatbK_i$ satisfies \cref{eq:H2_constraint}. Moreover, \cref{eq:lmi_H2} is feasible if $\hatbK_i^0$ is feasible for \cref{eq:H2_constraint}.
\end{corollary}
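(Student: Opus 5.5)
The plan is to treat the two nonconvex pieces of \cref{eq:H2_constraint} separately: the bilinear Lyapunov inequality in $(\bP_i,\bK_j)$ and the quadratic bound $(\bB_i^{cl})^T\bP_i\bB_i^{cl}\prec\bW_i$. In each, substitute $\bP_i{=}\bP_i^0{+}\dbP_i$ and $\bK_j{=}\bK_j^0{+}\dbK_j$, and apply \cref{thm:overbounding} with $\bG{=}\bI$ to the residual bilinear term. Feasibility is then shown by checking that $\dhatbK_i{=}\bzero$, $\dbP_i{=}\bzero$ satisfies both LMIs whenever $(\hatbK_i^0,\bP_i^0)$ satisfies \cref{eq:H2_constraint}.

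\textbf{Step 1 (Lyapunov block).} Expand
\[
\bP_i\bA_i^{cl}=\cT_0^1(\bP_i\bA_i^{cl})+\dbP_i\dbA_i^{cl},
\qquad \cT_0^1(\bP_i\bA_i^{cl})=\bP_i^0\bA_i^{cl,0}+\dbP_i\bA_i^{cl,0}+\bP_i^0\dbA_i^{cl}.
\]
Since $\bC_i^{cl}=\bC_i^{cl,0}+\dbC_i^{cl}$ is affine in $\dbK_i$, the first LMI of \cref{eq:H2_constraint} reads $\bQ+\he(\bX\bN\bY)\preceq0$ with
\[
\bQ=\begin{bmatrix}\he(\cT_0^1(\bP_i\bA_i^{cl})) & *\\ \bC_i^{cl,0}+\dbC_i^{cl} & -\bI\end{bmatrix},
\qquad \bX=\begin{bmatrix}\dbP_i\\ \bzero\end{bmatrix},\quad \bN=\bI,\quad \bY=\begin{bmatrix}\dbA_i^{cl} & \bzero\end{bmatrix}.
\]
Using the elementary bound $\he(\bX\bY)\preceq\frac12(\bX+\bY^T)(\bX+\bY^T)^T$, a Schur complement gives the third row and column $\dbP_i+\dbA_i^{cl}$ with $-2\bI$. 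This is exactly \cref{thm:overbounding} with $\bG=\bI$, where $\bX\bN+\bY^T\bG^T$ is collapsed into a single block. Hence the first LMI in \cref{eq:lmi_H2} implies the Lyapunov inequality; since $\preceq$ is used here, the non-strict version of the theorem is what is needed, and it holds by the same completion-of-squares argument.

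\textbf{Step 2 (trace bound).} Here $\bB_i^{cl}=\bB_i(\tilbH)_i$ does not depend on $\bK$, so the only nonlinearity is $(\bB_i^{cl})^T\bP_i\bB_i^{cl}$, linear in $\bP_i$. Applying a Schur complement to the second LMI, with the $(2,2)$ block negative definite, gives
\[
(\bB_i^{cl})^T\bP_i^0(\bP_i^0-\dbP_i)^{-1}\bP_i^0\bB_i^{cl}\preceq\bW_i .
\]
The sign of $\dbP_i$ in that block needs checking against the printed statement. I would show
\[
\bP_i^0(\bP_i^0-\dbP_i)^{-1}\bP_i^0\succeq\bP_i^0+\dbP_i ,
\]
which after a congruence with $(\bP_i^0)^{-1/2}$ becomes $(\bI-\Delta)^{-1}\succeq\bI+\Delta$ for symmetric $\Delta$ with $\bI-\Delta\succ0$. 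This follows from $(\bI-\Delta)(\bI+\Delta)=\bI-\Delta^2\preceq\bI$, or equivalently from the convexity of the matrix inverse and its tangent-line bound. Therefore $(\bB_i^{cl})^T(\bP_i^0+\dbP_i)\bB_i^{cl}\preceq\bW_i$.

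\textbf{Step 3 (feasibility).} At $\dbK=\bzero$, $\dbP_i=\bzero$, the first LMI becomes the original Lyapunov inequality augmented by a decoupled $-2\bI$ block, so it holds. The second becomes the Schur complement of $(\bB_i^{cl})^T\bP_i^0\bB_i^{cl}\preceq\bW_i$, which holds by assumption. Positivity $\bP_i^0\succ0$ is also given.

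\textbf{Main obstacle.} I expect Step 2 to be the hardest. The printed block $-\bP_i^0+\dbP_i$ matters: if the block is meant to be $-\bP_i^0-\dbP_i$ or $-(\bP_i^0)^{-1}$-like, the correct inequality direction must be verified, so I would first settle the sign by the scalar case. A secondary concern is converting strict to non-strict inequalities, which may require a small margin.
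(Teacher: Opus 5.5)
Your proposal is correct and matches the paper's proof: Step 1 is the same application of \cref{thm:overbounding} with $\bG{=}\bI$ to the residual $\he(\dbP_i\dbA_i^{cl})$. Step 2 uses the same Schur complement plus tangent-line bound on the matrix inverse, except that the paper applies the bound at $\bP_i{=}\bP_i^0{+}\dbP_i$, as $-\bP_i^{-1}\preceq-2(\bP_i^0)^{-1}+(\bP_i^0)^{-1}\bP_i(\bP_i^0)^{-1}$ followed by congruence with $\diag(\bI,\bP_i^0)$, while you apply it at $\bP_i^0{-}\dbP_i$. The paper's placement is marginally cleaner because it only needs $\bP_i^0{+}\dbP_i\succ0$, which is imposed in $\bbJ_i^c$; your Schur complement assumes $-\bP_i^0{+}\dbP_i\prec0$, which the non-strict LMI only gives as $\preceq0$, so the singular case needs a generalized Schur complement (the conclusion still holds).
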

\begin{proof}
    The first \gls{lmi} follows by applying \cref{eq:overbounding_sebe} of \cref{thm:overbounding} with $\bG{=}\bI$, $\bP_i{=}\bP_i^0{+}\dbP$ and $\bA_i^{cl}{=}\bA_i^{cl,0}{+}\dbA_i^{cl}$.
    The second \gls{lmi} follows applying Schur complement, using that ${-}\bP_i^{{-}1}\hspace{-2pt}{\prec}{-}2(\bP_i^0)^{-1}{+}(\bP_i^0)^{-1}\bP_i(\bP_i^0)^{-1}$, and then pre and post multiplication of $\diag(\bI,\bP_i^0)$.
\end{proof}

\begin{corollary} \label{cor:lmi_Hinf}
    Given $\hatbK_i^0$, and $\bP_i^0$, if there exist $\dhatbK_i$, $\dbP_i$, and $\gamma_i{\geq}0$ such that $\dhatbK_i{\in}\bbJ_i^c$, where 
    \begin{align} \label{eq:lmi_Hinf}
    \bbJ_i^c{=}\set*{\hspace{-5pt}
            \begin{array}{l}
                \dhatbK_i|\exists\dbP_i\;\sthat\;\bP_i^0{+}\dbP_i{\succ}0, \\
                \begin{bmatrix}
                \he(\cT_0^1(\bP_i\bA_i^{cl}) & * & * & * \\
                ((\bP_i^0{+}\dbP_i)\bB_i^{cl})^T & {-}\gamma_i\bI & \bzero & \bzero \\
                \bC_i^{cl,0}{+}\dbC_i^{cl} & \bzero & {-}\gamma_i\bI & \bzero \\
                \dbP_i{+}\dbA_i^{cl} & \bzero & \bzero & -2\bI
            \end{bmatrix}{\prec}0
            \end{array}\hspace{-5pt}
            % \dhatbK_i\Big|\exists\bP_i^0{+}\dbP_i\,\sthat
            % \hspace{-2pt}\setlength{\arraycolsep}{1pt}{\begin{bmatrix}
            %     \he(\cT_0^1(\bP_i\bA_i^{cl}) & * & * & * \\
            %     \hspace{-2pt}(\hspace{-1.5pt}(\bP_i^0{+}\dbP_i\hspace{-1pt})\bB_i^{cl})^T & \hspace{-2pt}{-}\gamma_i\bI\hspace{-1pt} & \bzero & \bzero \\
            %     \bC_i^{cl,0}\hspace{-2pt}{+}\dbC_i^{cl} & \bzero & \hspace{-5pt}{-}\gamma_i\bI\hspace{-1pt} & \bzero \\
            %     \dbP_i{+}\dbA_i^{cl} & \bzero & \bzero & \hspace{-3pt}-2\bI\hspace{-2pt}
            % \end{bmatrix}
            % }\hspace{-2pt}{\prec}0\hspace{-3pt}
        },%\hspace{-5pt}
    \end{align}
    then $\hatbK_i^0\hspace{-2pt}{+}\dhatbK_i$ is in \cref{eq:Hinf_constraint}. Moreover, \cref{eq:lmi_Hinf} is feasible if $\hatbK_i^0$ is feasible for \cref{eq:Hinf_constraint}.
\end{corollary}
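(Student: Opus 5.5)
The plan mirrors the proof of \cref{cor:lmi_H2}, except that the bilinear terms now enter both the $(1,1)$ block and the $(2,1)$ block of \cref{eq:Hinf_constraint}.

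\emph{Step 1 (expand the constraint).} Substitute $\bP_i{=}\bP_i^0{+}\dbP_i$ and $\bK_j{=}\bK_j^0{+}\dbK_j$, so that $\bA_i^{cl}{=}\bA_i^{cl,0}{+}\dbA_i^{cl}$ and $\bC_i^{cl}{=}\bC_i^{cl,0}{+}\dbC_i^{cl}$. Since $\bB_i^{cl}{=}\bB_i(\tilbH)_i$ does not depend on $\hatbK_i$, the $(2,1)$ block $(\bP_i\bB_i^{cl})^T$ is affine in $\dbP_i$. Likewise, $\bC_i^{cl}$ is affine in $\dbK_i$. The only nonconvex term is in the $(1,1)$ block:
\begin{align*}
\he(\bP_i\bA_i^{cl}) = \he(\cT_0^1(\bP_i\bA_i^{cl})) + \he(\dbP_i\,\dbA_i^{cl}),
\end{align*}
where $\cT_0^1(\bP_i\bA_i^{cl}){=}\bP_i^0\bA_i^{cl,0}{+}\dbP_i\bA_i^{cl,0}{+}\bP_i^0\dbA_i^{cl}$. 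Therefore the matrix in \cref{eq:Hinf_constraint} equals $\bQ{+}\he(\bX\bN\bY)$. Here $\bQ$ is the $3{\times}3$ block matrix with every term of \cref{eq:Hinf_constraint} except the second-order one, $\bX{=}[\dbP_i;\bzero;\bzero]$, $\bN{=}\bI$, and $\bY{=}[\dbA_i^{cl}\;\bzero\;\bzero]$.

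\emph{Step 2 (overbound).} Apply \cref{thm:overbounding} with $\bG{=}\bI$. This gives the sufficient condition
\begin{align*}
\begin{bmatrix}\bQ & \bX+\bY^T\\ \bX^T+\bY & -2\bI\end{bmatrix}\prec0 .
\end{align*}
The new off-diagonal row is $[(\dbP_i{+}\dbA_i^{cl})\;\bzero\;\bzero]$. This matches the fourth block row of \cref{eq:lmi_Hinf}, up to transposition conventions for symmetric $\dbP_i$. The proof of \cref{cor:lmi_H2} uses the same pairing. The resulting condition is exactly the LMI in \cref{eq:lmi_Hinf}, and it is affine in $(\dbP_i,\dhatbK_i,\gamma_i)$. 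By \cref{thm:overbounding}, feasibility implies that $\bP_i^0{+}\dbP_i{\succ}0$ and $\hatbK_i^0{+}\dhatbK_i$ satisfy \cref{eq:Hinf_constraint} with the same $\gamma_i$.

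\emph{Step 3 (feasibility at the initial point).} Set $\dbP_i{=}\bzero$ and $\dhatbK_i{=}\bzero$, and keep the $\gamma_i$ for which $(\hatbK_i^0,\bP_i^0)$ is feasible. Then the off-diagonal fourth row vanishes. The LMI becomes block diagonal, with the original strict inequality of \cref{eq:Hinf_constraint} at $(\hatbK_i^0,\bP_i^0)$ and the block $-2\bI{\prec}0$. It therefore holds, and $\bP_i^0{\succ}0$ gives the positivity constraint.

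The main obstacle is the bookkeeping in Step 1. One must verify that $\dbA_i^{cl}$ collects all dependence on $\dbK_j$ for every $j$, including neighbours' gains, which the tuple $\hatbK_i$ contains. One must also check that no other term is bilinear. In particular, $\bC_i^{cl}$ and $\bB_i^{cl}$ must not multiply $\bP_i$ or $\gamma_i$ in a nonconvex way, which holds here because $\gamma_i$ enters linearly. Once the splitting $\bQ{+}\he(\bX\bN\bY)$ is written out correctly, the rest follows directly from \cref{thm:overbounding}.
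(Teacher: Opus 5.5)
Your proposal is correct and is the paper's argument: the paper substitutes $\bP_i{=}\bP_i^0{+}\dbP_i$ and $\bA_i^{cl}{=}\bA_i^{cl,0}{+}\dbA_i^{cl}$ and applies \cref{thm:overbounding} with $\bG{=}\bI$ to the single bilinear term $\he(\dbP_i\dbA_i^{cl})$, and your $\bQ{+}\he(\bX\bN\bY)$ decomposition just writes that out explicitly. Your Step 3 also proves the ``moreover'' claim, which the paper leaves implicit; that step needs $\bP_i^0$ (with some $\gamma_i$) to be a certificate for $\hatbK_i^0$ rather than an arbitrary given matrix, as you implicitly assume.
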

\begin{proof}
    The proof follows by applying \cref{eq:overbounding_sebe} of \cref{thm:overbounding} with $\bG{=}\bI$, $\bP_i{=}\bP_i^0{+}\dbP$ and $\bA_i^{cl}{=}\bA_i^{cl,0}{+}\dbA_i^{cl}$.
\end{proof}

\subsection{Numerical Example}
\begin{figure}
    \centering
    \includegraphics[height = 0.065\textheight]{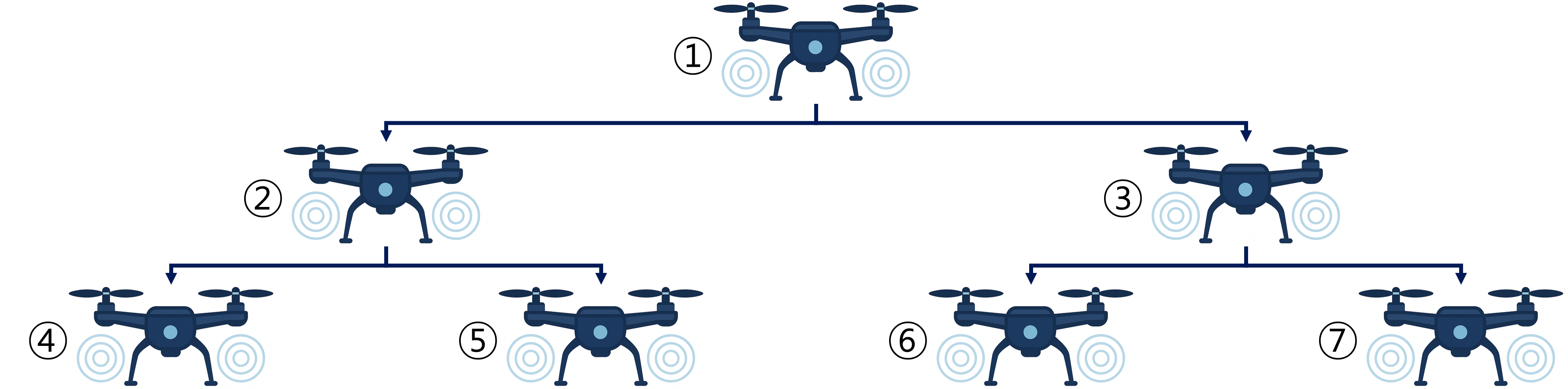}
    \caption{Network of \glspl{uav}.}
    \label{fig:Graph}
    \vspace*{-1.75\baselineskip} 
\end{figure}

We consider a network of 7 \glspl{uav} operating in a 2D plane as shown in \cref{fig:Graph}.
The parameters for each agent are assumed to be uncertain, varying uniformly within $\pm$10\% of  their nominal values.
The dynamics of the $i^\thh$ agent and the network interconnection are defined as
\begin{align*}
    \bA_i{=}{\setlength{\arraycolsep}{2.5pt}\begin{bmatrix}  
        \bzero_{3{\times} 2} & \bzero_{3{\times} 1} & \bI_3 \\ 
        \bzero_{1{\times} 2} & -g & \bzero_{1{\times} 3} \\ 
        \bzero_{2{\times}2} & \bzero_{2{\times} 1} & \bzero_{2{\times} 3}\end{bmatrix}}\hspace{-2pt},\;
    \bB_i{=}{\setlength{\arraycolsep}{2.5pt}\begin{bmatrix}
        \bzero_{4{\times} 1} & \bzero_{4{\times} 1} \\
        \frac{1}{m_i} & \frac{1}{m_i} \\
        -\frac{l_i}{I_{i}} & \frac{l_i}{I_{i}}
    \end{bmatrix}}\hspace{-2pt},\;
    \hatbH{=}{\setlength{\arraycolsep}{2.5pt}\begin{bmatrix}
        \bL_{1} & \bzero_{18\times24} \\ 
        \bL_{2} & \bI_{24}
    \end{bmatrix}}\hspace{-1pt},
\end{align*}
$\bH{=}\bzero_{21{\times}42}$, and $\tilbH{=}\bI_{21}$, where 
\begin{align*}
    \bL_1{=}{\setlength{\arraycolsep}{2.5pt}\begin{bmatrix}
        \bI_6 & \bzero_{6{\times}6} & \bzero_{6{\times}6}\\
        -\bI_6 & \bI_6 &\bzero_{6{\times}6} \\
        -\bI_6 & \bzero_{6{\times}6} & \bI_6
    \end{bmatrix}},\quad
    \bL_2{=}{\setlength{\arraycolsep}{2.5pt}\begin{bmatrix}
        \bzero_{6{\times}6} & -\bI_6 & \bzero_{6{\times}6} \\
        \bzero_{6{\times}6} & -\bI_6 & \bzero_{6{\times}6} \\
        \bzero_{6{\times}6} & \bzero_{6{\times}6} & -\bI_6 \\
        \bzero_{6{\times}6} & \bzero_{6{\times}6} & -\bI_6
    \end{bmatrix}},
    % \;
    % \bL_3{=}{\setlength{\arraycolsep}{2pt}\begin{bmatrix}
    %     \bzero_6 & \bzero_6 & -\bI_6 \\
    %     \bzero_6 & \bzero_6 & -\bI_6
    % \end{bmatrix}}.
\end{align*}
where $g$ is gravitational acceleration and $m_n$=3 kg, $l_n$=0.2m, and $I_n$=1 kg-m$^2$ are the nominal mass, moment of inertia, and wing length of all \glspl{uav}.
%The nominal values are $m_n$=3 kg, $l_n$=0.2m, and $I_n$=1 kg-m$^2$.

Each agent is a linear system with polytopic uncertainty in $\bB_i$ for all $i\in\bbN_7$, so the \gls{lmi} in \cite[Lemma 2]{jang2025communication} is used as a dissipativity constraint in \cref{eq:Main:plant}.
The $i^\thh$ closed-loop dynamics are parameterized as
\begin{align*}
    \bA_i^{cl}=\bA_i{-}\bB_i\bK_i,\;\;
    \bB_i^{cl}=\bB_i,\;\;
    (\bC_i^{cl})^T=\begin{bmatrix}
        \bI^T & -\bK_i^T
    \end{bmatrix}^T.
\end{align*}
Note that the $i^\thh$ local transfer function is independent of $\bK_j$ for $j{\neq} i$, so the consensus variable $\tilbZ$ is omitted and $\hatbK_i{=}\bK_i$.

Agents 1-3 minimize their $\cH_2$ norms, while agents 4-7 minimize their $\cH_\infty$ norms. Both objectives are calculated relative to the nominal dynamics.
The \gls{lqr} controllers with $\bQ_{lqr}{=}\bI_6$ and $\bR_{lqr}{=}\bI_3$ are used as initial feasible controllers of each \gls{uav}.
Using these initial controllers, the nominal $\cH_2$ and $\cH_\infty$ norms are $4.702$ and $1.829$, respectively.
\cref{eq:closed_loop_stability_constraint} is added to accelerate convergence.
To run \cref{alg:01,alg:02}, $\rho{=}100$ and $\epsilon{=}\epsilon_p{=}\epsilon_d=10^{-3}$ are used.
\cref{fig:performance} shows that \cref{alg:01} converges after 7468 iterations and significantly reduces $\cH_\infty$ norm of agents. Interestingly, $\cH_2$ norm remain unchanged, suggesting the initial \gls{lqr} design is already near-optimal for the $\cH_2$ objective.

We compare the performance of the proposed method against the centralized results obtained via \gls{ico} with $\epsilon{=}10^{-3}$, as summarized in \cref{tb:performance}. 
As previously noted, the $\cH_2$ norm remains unchanged.
While the centralized controller synthesized without \gls{ndt} achieves the highest performance, it fails to guarantee stability for the real system under uncertainty.
The performance of the proposed framework is only slightly lower than that of the centralized approach, and it significantly improves upon the initial guess.
Moreover, our method achieves acceptable performance without sharing agent's local dynamics across the entire network. Hence, this method improves privacy with little sacrifice in performance.

% \cref{fig:performance,tb:performance} show that \cref{alg:01} successfully converges after 7468 iterations and significantly reduces $\cH_\infty$ norm of agents. Interestingly, $\cH_2$ norm remain unchanged, suggesting the initial \gls{lqr} design is already near-optimal for the $\cH_2$ objective.

% \begin{figure}
% \centering
%     \begin{subfigure}[t]{0.15\textwidth}
%         \centering
%         \resizebox{\textwidth}{!}{\input{Figures_tex/H2_norm}}
%     \end{subfigure}
%     \begin{subfigure}[t]{0.15\textwidth}
%         \centering
%         \resizebox{\textwidth}{!}{\input{Figures_tex/Hinf_norm}}
%     \end{subfigure}
%     \begin{subfigure}[t]{0.15\textwidth}
%         \centering
%         \resizebox{\textwidth}{!}{\input{Figures_tex/total_norm}}
%     \end{subfigure}
%     \caption{System responses to \texorpdfstring{$\cL_2$}{L2} disturbances: The dotted lines indicate equilibrium points of each \gls{uav}.} \label{fig:stability_analysis_result}
%     % \vspace*{-1.25\baselineskip} 
% \end{figure}

\begin{figure}[t]
    \captionsetup[sub]{aboveskip=0pt, belowskip=0pt}
    \centering
    \subfloat[$\cH_2$ norms for first 3 \glspl{uav}]{\includegraphics[width=0.23\textwidth]{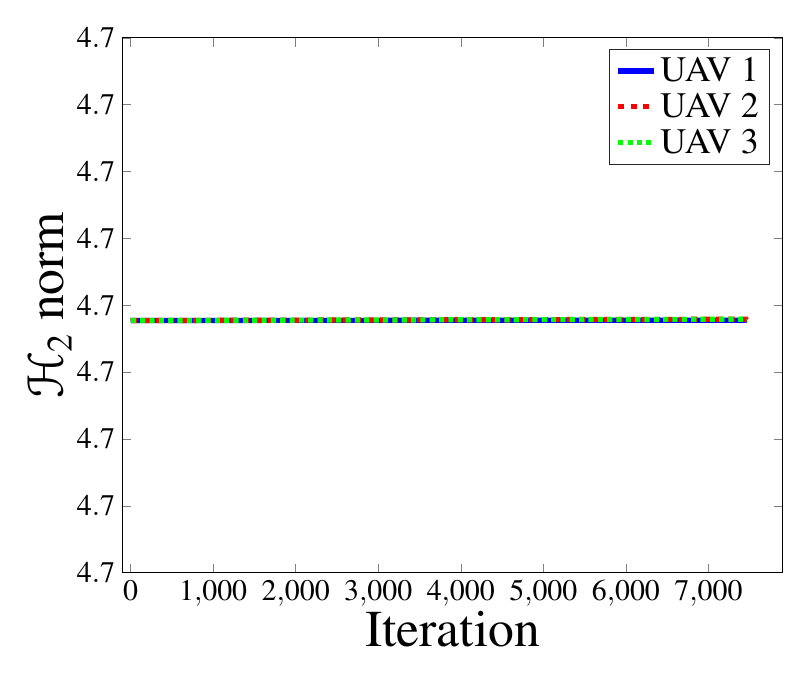}} \label{fig:H2}
    \subfloat[$\cH_\infty$ norm for last 4 \glspl{uav}]{\includegraphics[width=0.23\textwidth]{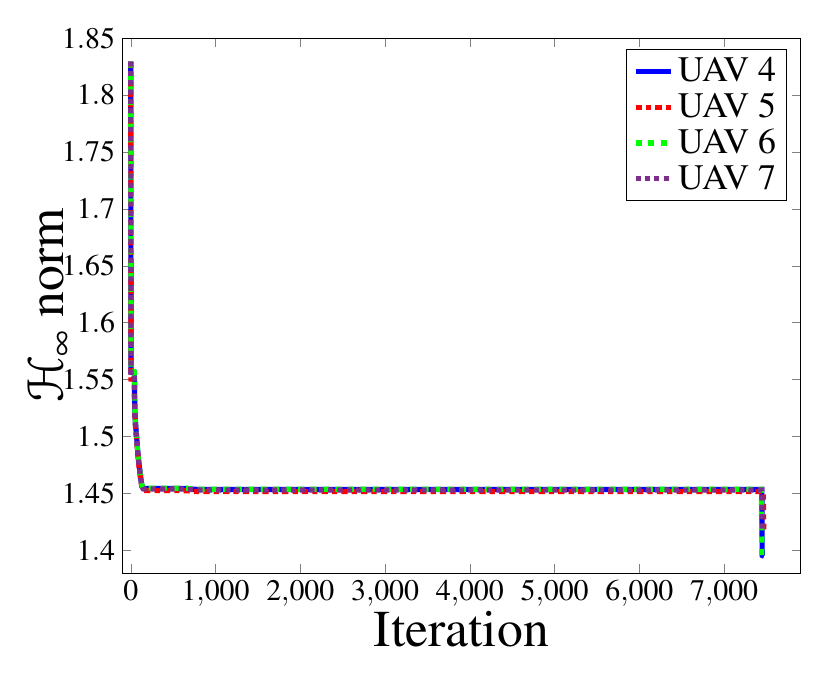}} \label{fig:Hinf}
    % \subfloat[]{\includegraphics[width=0.155\textwidth]{Figures_tex/Hnormsum.pdf}} \label{fig:sum}
    % \vspace{-8pt}
    \caption{Evolution of performance metrics over \cref{alg:01}.} \label{fig:performance}
    \vspace*{-.5\baselineskip} 
\end{figure}

\setlength{\tabcolsep}{1pt}
\begin{table} [t]
\begin{center}
\caption{Agent performance comparison. 1$^\text{st}$ row: distributed synthesis via \cref{alg:01}; 2$^\text{nd}$ row: centralized synthesis; 3$^\text{rd}$ row: centralized synthesis without \cref{eq:Main:stability}.}\label{tb:performance}
\begin{tabular}{ccccccccc}
\hline\hline
& \gls{uav}$\,$1 & \gls{uav}$\,$2 & \gls{uav}$\,$3 & \gls{uav}$\,$4 & \gls{uav}$\,$5 & \gls{uav}$\,$6 & \gls{uav}$\,$7 \\ \hline
Distributed & 4.702 & 4.702 & 4.702 & 1.395 & 1.418 & 1.395 & 1.418 \\
Centralized & 4.702 & 4.702 & 4.702 & 1.325 & 1.325 & 1.325 & 1.325 \\
Centralized (no \gls{ndt})\hspace{-1pt} & 4.702 & 4.702 & 4.702 & 1.322 & 1.322 & 1.322 & 1.322
\\ \hline\hline
\end{tabular}
\end{center}
\vspace*{-2.5\baselineskip} 
\end{table}

\begin{figure}[t]
\centering
    \begin{subfigure}[t]{0.23\textwidth}
        \centering
        \resizebox{\textwidth}{!}{% This file was created by matlab2tikz.
%
%The latest updates can be retrieved from
%  http://www.mathworks.com/matlabcentral/fileexchange/22022-matlab2tikz-matlab2tikz
%where you can also make suggestions and rate matlab2tikz.
%
\definecolor{mycolor1}{rgb}{0.00000,0.44700,0.74100}%
\definecolor{mycolor2}{rgb}{0.85000,0.32500,0.09800}%
\definecolor{mycolor3}{rgb}{0.92900,0.69400,0.12500}%
\definecolor{mycolor4}{rgb}{0.49400,0.18400,0.55600}%
\definecolor{mycolor5}{rgb}{0.46600,0.67400,0.18800}%
\definecolor{mycolor6}{rgb}{0.30100,0.74500,0.93300}%
\definecolor{mycolor7}{rgb}{0.63500,0.07800,0.18400}%
\begin{tikzpicture}

\begin{axis}[%
width=4.521in,
height=3.566in,
at={(0.758in,0.481in)},
scale only axis,
xmin=0,
xmax=30,
xlabel style={font=\color{white!15!black}},
xlabel={time [s]},
ymin=-0.8,
ymax=2.8,
ylabel style={font=\color{white!15!black}},
ylabel={x position [m]},
axis background/.style={fill=white},
xmajorgrids,
ymajorgrids,
legend style={at={(0.97,0.03)}, anchor=south east, legend cell align=left, align=left, draw=white!15!black},
xlabel style={font={\Huge}},ylabel style={font=\Huge},legend style={font=\Large},
]
\addplot [color=mycolor1, line width=2.5pt]
  table[x=t, y=e, col sep=comma]
  {Figures_tex/Data/x1.csv};
\addlegendentry{UAV 1}

\addplot [color=mycolor2, line width=2.5pt]
  table[x=t, y=e, col sep=comma]
  {Figures_tex/Data/x2.csv};
\addlegendentry{UAV 2}

\addplot [color=mycolor3, line width=2.5pt]
  table[x=t, y=e, col sep=comma]
  {Figures_tex/Data/x3.csv};
\addlegendentry{UAV 3}

\addplot [color=mycolor4, line width=2.5pt]
  table[x=t, y=e, col sep=comma]
  {Figures_tex/Data/x4.csv};
\addlegendentry{UAV 4}

\addplot [color=mycolor5, line width=2.5pt]
  table[x=t, y=e, col sep=comma]
  {Figures_tex/Data/x5.csv};
\addlegendentry{UAV 5}

\addplot [color=mycolor6, line width=2.5pt]
  table[x=t, y=e, col sep=comma]
  {Figures_tex/Data/x6.csv};
\addlegendentry{UAV 6}

\addplot [color=mycolor7, line width=2.5pt]
  table[x=t, y=e, col sep=comma]
  {Figures_tex/Data/x7.csv};
\addlegendentry{UAV 7}

\addplot [color=black, dotted, line width=2pt]
  table[row sep=crcr]{%
0	0\\
30	0\\
};
% \addlegendentry{Equilibrium}

\addplot [color=black, dotted, line width=2pt]
  table[row sep=crcr]{%
0	1\\
30	1\\
};

\addplot [color=black, dotted, line width=2pt]
  table[row sep=crcr]{%
0	2\\
30	2\\
};

\end{axis}
\end{tikzpicture}%
}
    \end{subfigure}
    \begin{subfigure}[t]{0.23\textwidth}
        \centering
        \resizebox{\textwidth}{!}{% This file was created by matlab2tikz.
%
%The latest updates can be retrieved from
%  http://www.mathworks.com/matlabcentral/fileexchange/22022-matlab2tikz-matlab2tikz
%where you can also make suggestions and rate matlab2tikz.
%
\definecolor{mycolor1}{rgb}{0.00000,0.44700,0.74100}%
\definecolor{mycolor2}{rgb}{0.85000,0.32500,0.09800}%
\definecolor{mycolor3}{rgb}{0.92900,0.69400,0.12500}%
\definecolor{mycolor4}{rgb}{0.49400,0.18400,0.55600}%
\definecolor{mycolor5}{rgb}{0.46600,0.67400,0.18800}%
\definecolor{mycolor6}{rgb}{0.30100,0.74500,0.93300}%
\definecolor{mycolor7}{rgb}{0.63500,0.07800,0.18400}%
\begin{tikzpicture}

\begin{axis}[%
width=4.521in,
height=3.566in,
at={(0.758in,0.481in)},
scale only axis,
xmin=0,
xmax=30,
xlabel style={font=\color{white!15!black}},
xlabel={time [s]},
ymin=-4,
ymax=4,
ylabel style={font=\color{white!15!black}},
ylabel={y position [m]},
axis background/.style={fill=white},
xmajorgrids,
ymajorgrids,
legend style={at={(0.97,0.03)}, anchor=south east, legend cell align=left, align=left, draw=white!15!black},
xlabel style={font={\Huge}},ylabel style={font=\Huge},legend style={font=\Large},
]
\addplot [color=mycolor1, line width=2.5pt]
  table[x=t, y=e, col sep=comma]
  {Figures_tex/Data/y1.csv};
\addlegendentry{UAV 1}

\addplot [color=mycolor2, line width=2.5pt]
  table[x=t, y=e, col sep=comma]
  {Figures_tex/Data/y2.csv};
\addlegendentry{UAV 2}

\addplot [color=mycolor3, line width=2.5pt]
  table[x=t, y=e, col sep=comma]
  {Figures_tex/Data/y3.csv};
\addlegendentry{UAV 3}

\addplot [color=mycolor4, line width=2.5pt]
  table[x=t, y=e, col sep=comma]
  {Figures_tex/Data/y4.csv};
\addlegendentry{UAV 4}

\addplot [color=mycolor5, line width=2.5pt]
  table[x=t, y=e, col sep=comma]
  {Figures_tex/Data/y5.csv};
\addlegendentry{UAV 5}

\addplot [color=mycolor6, line width=2.5pt]
  table[x=t, y=e, col sep=comma]
  {Figures_tex/Data/y6.csv};
\addlegendentry{UAV 6}

\addplot [color=mycolor7, line width=2.5pt]
  table[x=t, y=e, col sep=comma]
  {Figures_tex/Data/y7.csv};
\addlegendentry{UAV 7}

\addplot [color=black, dotted, line width=2.0pt]
  table[row sep=crcr]{%
0	0\\
30	0\\
};
% \addlegendentry{Equilibrium}

\addplot [color=black, dotted, line width=2.0pt]
  table[row sep=crcr]{%
0	1\\
30	1\\
};

\addplot [color=black, dotted, line width=2.0pt]
  table[row sep=crcr]{%
0	2\\
30	2\\
};

\addplot [color=black, dotted, line width=2.0pt]
  table[row sep=crcr]{%
0	3\\
30	3\\
};

\addplot [color=black, dotted, line width=2.0pt]
  table[row sep=crcr]{%
0	-1\\
30	-1\\
};

\addplot [color=black, dotted, line width=2.0pt]
  table[row sep=crcr]{%
0	-2\\
30	-2\\
};

\addplot [color=black, dotted, line width=2.0pt]
  table[row sep=crcr]{%
0	-3\\
30	-3\\
};

\end{axis}
\end{tikzpicture}%
}
    \end{subfigure}
    \caption{System responses to \texorpdfstring{$\cL_2$}{L2} disturbances. The dotted lines indicate equilibrium points of each \gls{uav}.} \label{fig:response}
    \vspace*{-1.25\baselineskip} 
\end{figure}
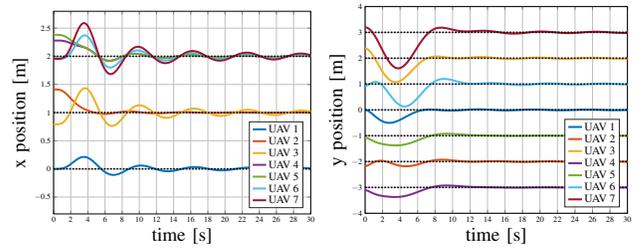

\begin{figure}[t]
\centering
    \includegraphics[width=0.30\textwidth]{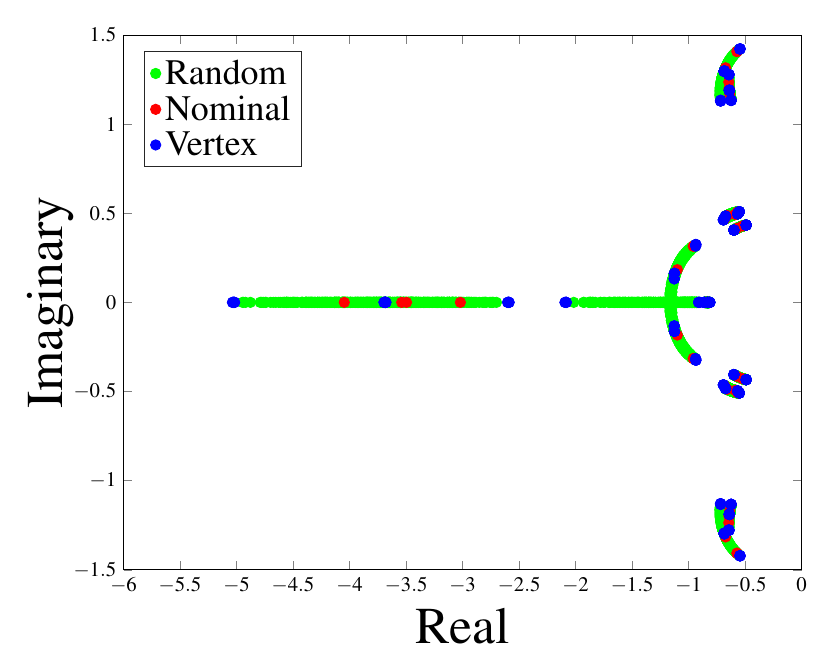}
    \caption{Pole locations of $\diag(\bA_i^{cl})_{i\in\bbN_7}$. ``Random" denotes poles from sampled realizations, ``Vertex" denotes poles corresponding to the 8 vertices of the polytopic uncertainty.} \label{fig:pole}
    \vspace*{-1.75\baselineskip} 
\end{figure}

Each agent input incurs an $\cL_2$ disturbance, $\bn_i{=}[e^{-t}\; \frac{\sin t}{t}]^T$ for all $i{\in}\bbN_7$ and its initial states are randomly sampled from a centered interval of $\pm0.5$ m relative to its equilibrium.
\cref{fig:response} shows that all agents asymptotically return to their equilibrium states, which validates that the entire network is $\cL_2$ stable.
In addition, we calculate the eigenvalues of global $\bA$ matrix for 100 sampled physical parameters and all 8 vertices of polytopic uncertainty.
As shown in \cref{fig:pole}, all poles remain the left-half plane.
This is ensured by the dissipativity-based design.

\section{CONCLUSIONS} \label{chap:Conclusion}
This paper presents a distributed controller synthesis method that preserves the privacy of each agent's dynamics.
The proposed algorithm provides a framework in which each agent independently synthesize a local controller while satisfying a global consensus condition associated with the network-wide stability.
As one possible application, the framework is applied to the system with full-state measurement.
The 2D swarm \glspl{uav} example shows that the proposed framework successfully synthesizes local controllers without sharing agent's dynamical information.

\addtolength{\textheight}{-12cm}   % This command serves to balance the column lengths
                                  % on the last page of the document manually. It shortens
                                  % the textheight of the last page by a suitable amount.
                                  % This command does not take effect until the next page
                                  % so it should come on the page before the last. Make
                                  % sure that you do not shorten the textheight too much.

%%%%%%%%%%%%%%%%%%%%%%%%%%%%%%%%%%%%%%%%%%%%%%%%%%%%%%%%%%%%%%%%%%%%%%%%%%%%%%%%

%%%%%%%%%%%%%%%%%%%%%%%%%%%%%%%%%%%%%%%%%%%%%%%%%%%%%%%%%%%%%%%%%%%%%%%%%%%%%%%%

%%%%%%%%%%%%%%%%%%%%%%%%%%%%%%%%%%%%%%%%%%%%%%%%%%%%%%%%%%%%%%%%%%%%%%%%%%%%%%%%

% \section*{ACKNOWLEDGMENT}

% TBD

%%%%%%%%%%%%%%%%%%%%%%%%%%%%%%%%%%%%%%%%%%%%%%%%%%%%%%%%%%%%%%%%%%%%%%%%%%%%%%%%

\bibliographystyle{IEEEtran}
\bibliography{MyBib}{}

\end{document}